\documentclass[aps,prx,twocolumn,notitlepage,superscriptaddress]{revtex4-2}

\usepackage{amsmath, amssymb, amsfonts}
\usepackage{amsthm}
\usepackage{mathtools}
\usepackage{mathrsfs}
\usepackage{graphicx}
\usepackage[caption=false]{subfig}
\usepackage{bm}
\usepackage{hyperref}
\usepackage{xcolor}

\DeclareMathOperator{\dist}{dist}

\newtheorem{theorem}{Theorem}

\newtheorem{lemma}{Lemma}

\theoremstyle{remark}
\newtheorem{remark}{Remark}

\hypersetup{
    colorlinks=true,
    linkcolor=blue,
    filecolor=magenta,
    urlcolor=cyan,
    pdftitle={The quantum harmonic oscillator from binary sequences},
    pdfauthor={Powers, Skrgic, Stojkovic}
}

\begin{document}

\title{The quantum harmonic oscillator from binary sequences}

\author{Samuel Powers}
\email{sampower@buffalo.edu}
\affiliation{Department of Physics, University at Buffalo, State University of New York, Buffalo, NY 14260, USA}
\author{Dino \v{S}krgi\'{c}}
\email{dinoskrg@buffalo.edu}
\affiliation{Department of Physics, University at Buffalo, State University of New York, Buffalo, NY 14260, USA}
\author{Dejan Stojkovic}
\email{ds77@buffalo.edu}
\affiliation{Department of Physics, University at Buffalo, State University of New York, Buffalo, NY 14260, USA}
\date{\today}

\begin{abstract}
The quantum harmonic oscillator (QHO) is normally built on a structure that contains a Hilbert space, ladder operators, and a Born rule. Here we derive it from counting. We adopt three information-theoretic postulates: information is carried by binary sequences of length $n$; only symbol counts, not the sequences themselves, are observable; and the transition measure must account for exactly the $2^n$-configuration capacity of the register. The first two postulates force the counting problem into the Hamming/Delsarte association scheme and endow it with an exact symmetry under relabeling of the two symbols ($0$ and $1$). Given the single modeling choice of this paper (an exchange-symmetric bilinear transition measure) we prove a rigidity theorem: relabeling symmetry and normalization together admit exactly one weighting of the hidden overlap between input and output sequences, namely the alternating sign $(-1)^t$, and every deviation from it strictly overshoots the capacity bound. The variable that carries the sign, i.e. the overlap between input and output sequences, or equivalently their Hamming distance $d_{ab}$, is a hidden quantum number of the transition as a whole, defined only by the two endpoints jointly and unknowable from either alone. Interference thus enters as forced alternating-sign bookkeeping over a nonlocal, relational variable $d_{ab}$, rather than as a separate dynamical ingredient. This forced weighting makes the transition probability proportional to the square of a Krawtchouk polynomial, whose difference equation has an exactly equally spaced spectrum and which converges to the QHO as $n\to\infty$. At fixed excitation, deviations from the QHO are $\mathcal{O}(1/n)$, and are testable in laboratory realizations of the oscillator, since any real system has finite information capacity. In this framework, interference, quadratic (Born-rule-like) probabilities, and the particle-hole symmetric oscillator spectrum are consequences of counting; only the bilinear form of the measure is assumed rather than derived.
\end{abstract}

\maketitle

\section{Introduction}
\label{sec:intro}

The quantum harmonic oscillator (QHO) is the most important exactly solvable model in physics, with direct and indirect applications throughout condensed matter physics~\cite{sakurai2017modernqm,walls2008quantum}, high energy physics~\cite{peskin1995introduction}, and quantum technologies~\cite{Ding:2015xyg,RevModPhys.75.281,RevModPhys.93.025005}. Its standard derivation, via creation and annihilation operators acting on a Hilbert space~\cite{sakurai2017modernqm}, takes the Hilbert space, the operator algebra, and the Born rule as given rather than derived. This raises a standing question in the foundations of quantum theory: to what extent can this structure be obtained from constraints that are not themselves quantum mechanical?

Two broad strategies have addressed this question. Axiomatic, top-down reconstructions~\cite{hardy2001quantumtheoryreasonableaxioms,dariano2017,chiribella2011informational} select quantum theory from a small set of operational or information-theoretic axioms (purification, local tomography, and related postulates), with interference and the Born rule entering as consequences of the axiom set as a whole rather than as separately motivated ingredients. Epistemically restricted, bottom-up hidden-variable models~\cite{spekkens2007toy,catani2017spekkens} instead posit an underlying ontic state space together with a restriction on what an observer can know about it, and show that a large fraction of the qualitative features of quantum mechanics follow~\cite{spekkens2007toy}. A third, largely independent tradition treats quantum dynamics itself as emergent from an underlying classical-statistical or cellular-automaton substrate evolving in time~\cite{wetterich2010qmfromclassical,thooft2016cellular,elze2014quantumness}.

The present work extends a fourth, more narrowly combinatorial route, developed in a companion series on binary-sequence models of quantum spin~\cite{Powers:2021rfg,Powers:2023ydg,Powers:2023sqf}. There, correlations between classical binary sequences, subject to an observability restriction on symbol counts, were shown to reproduce spin-$1/2$ quantum statistics, with the alternating sign responsible for interference traced to characters of a finite group. The natural question left open by that work is whether the same construction can reach the unbounded, equally spaced spectrum of the QHO, which requires an orthogonal polynomial family on the full binary Hamming scheme rather than on a single finite group orbit. We show that it does: the relevant family is the Krawtchouk polynomials~\cite{coleman2011krawtchouk,Feinsilver:2007oyk,kravchuk1929generalisations}, already well known as the finite-oscillator polynomials in the mathematical-physics literature~\cite{Atakishiyev2005,Atakishiyev_2001:JPA,Nikiforov_1991:inb}, whose three-term recurrence encodes the raising and lowering structure of the QHO in the large-$n$ limit.

We split the result into two parts, one forced and one assumed. Two information-theoretic postulates---that an observer has access only to symbol counts of a length-$n$ binary sequence, and that the total number of distinguishable configurations is exactly $2^n$---are enough by themselves to put the counting problem in the form of a Hamming/Delsarte association scheme~\cite{Delsarte1998AssociationSA}. This scheme has $n+1$ commuting relation matrices with a single joint eigenbasis, the characters of $(\mathbb{Z}_2)^n$, whose eigenvalues are the Krawtchouk polynomials, and none of this presupposes a probability rule, a Hilbert space, or a sign convention. To turn this structure into a transition probability we then adopt a bilinear measure built from a forward ensemble (input to output) and a backward ensemble (output to input). The motivation is the symmetry of the problem under exchanging the two endpoints of a transition and, more pointedly, the observability restriction itself: a forward ensemble alone fixes the input and ranges over outputs, tacitly treating the initial state as an exactly known sequence---knowledge that postulate (II) forbids, and the combinatorial analog of resolving a state beyond what the uncertainty principle permits---while the backward ensemble, which fixes the output and ranges over the admissible inputs, restores the same incomplete-knowledge status to both endpoints (Sec.~\ref{sec:rigidity}). Given that bilinear ansatz, we prove a rigidity theorem (Theorem~\ref{thm:rigidity} in Sec.~\ref{sec:rigidity}, with the full proof in Appendix~\ref{app:proof}): normalization and symbol-relabeling symmetry force the parity-alternating sign structure, and they do so constructively---any other choice of sign at any overlap value strictly overshoots the capacity bound, by an amount we compute exactly.

The result is as follows: given an information-capacity constraint on the Hamming scheme and a bilinear, exchange-symmetric transition ansatz, the sign structure responsible for interference is unique, and it is exactly the structure that produces the Krawtchouk finite oscillator and, in the continuum limit, the QHO. This explains why interference has this sign pattern and not some other one, \emph{given the ansatz}, rather than deriving the necessity of the ansatz itself from the postulates alone. A byproduct of the derivation is a sharp statement about \emph{where} interference resides: the forced sign is a function of the Hamming distance $d_{ab}$ between the input and the output (equivalently of their overlap $t$), a variable that behaves as a quantum number of the transition rather than of a state. It is nonlocal in the precise sense that its value requires joint knowledge of the initial and final states and cannot be reconstructed from either endpoint alone, and by postulate (II) it is hidden from the observer. Interference then appears as mandatory bookkeeping over this relational hidden variable, a reading we develop in Secs.~\ref{sec:krawtchouk} and~\ref{sec:discussion}.

We adopt three postulates:

\begin{itemize}
\item[\bf (I)] Information about the system of interest is encoded in binary sequences of length $n$.
\item[\bf (II)] Individual sequences are not observable. An observer has access only to symbol counts---the counts of a single sequence, and the joint counts of a pair of sequences---together with the cardinalities of the sets of sequences consistent with those counts. A transition probability between an input and an output can therefore depend on the pair only through such counts.
\item[\bf (III)] The transition measure assigned to a fixed input, summed over all outputs, must exhaust the information capacity of the register exactly: it accounts for exactly $2^n$ distinguishable configurations, neither more (which would overcount observationally indistinguishable configurations) nor fewer (which would leave configurations unaccounted for).
\end{itemize}

Postulates (I) and (II) are purely combinatorial statements about what information an observer can access; they do not by themselves specify a probability rule. Postulate (III) is normalization, expressed in capacity form: dividing the total measure by $2^n$, it reads $\sum_{\rm outputs} P = 1$. The first two postulates force the association scheme structure and an exact bit-relabeling symmetry of the counting problem; the third fixes the overall scale. Together, as we prove below, they leave exactly one admissible sign structure.

The paper is organized as follows. Section~\ref{sec:scheme} sets up the counting problem, exhibits the exact failure of unsigned counting, and identifies the structure that postulates (I) and (II) force with no further input: the Hamming association scheme and its relabeling symmetry. Section~\ref{sec:rigidity} introduces the bilinear ansatz, states the rigidity theorem, and sketches its proof; the complete proof is in Appendix~\ref{app:proof}. Section~\ref{sec:krawtchouk} evaluates the forced measure in closed form, identifies it with the Krawtchouk finite oscillator, and takes the continuum limit to the QHO. Section~\ref{sec:numerics} presents numerical checks, including an entropy feature at the particle-hole self-dual point for which we give an exact explanation. Appendix~\ref{app:n4} works the $n=4$ case by direct integer counting.

\section{Model}
\label{sec:scheme}

Let $S^1(n)$ denote the set of all binary sequences of length $n \in \mathbb{N}_0$. Tilde notation denotes symbol counts: a sequence $s^1 \in S^1(n,\tilde{1})$ contains $\tilde{1} \in \mathbb{N}_0$ ones and $\tilde{0} \equiv n - \tilde{1}$ zeros, so that
\begin{align}
\label{s1}
    |S^1(n,\tilde{1})| = \frac{n!}{\tilde{1}!\,\tilde{0}!} = \binom{n}{\tilde 1}, \qquad |S^1(n)| = 2^n.
\end{align}

We model a measurement as a transition $s^1_a \to s^1_b$ (Alice to Bob) between an input sequence $s^1_a \in S^1(n,\tilde{1}_a)$ and an output sequence $s^1_b \in S^1(n,\tilde{1}_b)$. The object of interest is the probability distribution over output counts $\tilde{1}_b$, conditioned on fixed input counts $\tilde{1}_a$. Where it lightens the notation, we abbreviate $a\equiv\tilde 1_a$ and $b\equiv\tilde 1_b$.

The element-wise pairing $s^1_a \otimes s^1_b$ is a base-4 sequence over $\{00,01,10,11\}$ with symbol frequencies $(f_{00},f_{01},f_{10},f_{11})$. These four frequencies are completely determined by $n$, $\tilde{1}_a$, $\tilde{1}_b$, and one additional variable, which may be taken either as the Hamming distance $d_{ab} \equiv f_{01}+f_{10}$ or, equivalently, as the overlap
\begin{align}
\label{eq:overlap}
    t \equiv f_{11} = |s_a \cap s_b|,
\end{align}
the number of positions at which both sequences carry a $1$. The two parametrizations are related within each sector by the fixed affine bijection $d_{ab} = \tilde{1}_a + \tilde{1}_b - 2t$; Table~\ref{tab:frequencies} lists all four frequencies in terms of $t$. We use $t$ throughout, because it is defined symmetrically for any pair of sequences and is manifestly invariant under $s_a \leftrightarrow s_b$ exchange; the choice is a matter of convenience only, as we show explicitly in Sec.~\ref{sec:rigidity}.

\begin{table}[ht!]
\centering
\begin{ruledtabular}
\begin{tabular}{c@{\hspace{1em}}c@{\hspace{1em}}c}
Symbol & Count (via $t$) & Count (via $d_{ab}$) \\
\hline
$f_{11}$ & $t$ & $(\tilde{1}_a + \tilde{1}_b - d_{ab})/2$ \\
$f_{10}$ & $\tilde{1}_a - t$ & $(\tilde{1}_a - \tilde{1}_b + d_{ab})/2$ \\
$f_{01}$ & $\tilde{1}_b - t$ & $(\tilde{1}_b - \tilde{1}_a + d_{ab})/2$ \\
$f_{00}$ & $n - \tilde{1}_a - \tilde{1}_b + t$ & $n - (\tilde{1}_a + \tilde{1}_b + d_{ab})/2$ \\
\end{tabular}
\end{ruledtabular}
\caption{Base-4 symbol frequencies of the pairing $s^1_a\otimes s^1_b$, expressed through the overlap $t=f_{11}$ and, equivalently, through the Hamming distance $d_{ab}=\tilde 1_a+\tilde 1_b-2t$.}
\label{tab:frequencies}
\end{table}

The cardinality of the measurement set $S^2(n,\tilde{1}_a,\tilde{1}_b,t)$ of all base-4 sequences consistent with these counts is
\begin{align}
\label{S2 in terms of f00 etc}
    |S^2(n,\tilde{1}_a,\tilde{1}_b,t)| = \frac{n!}{f_{00}!\,f_{01}!\,f_{10}!\,f_{11}!}.
\end{align}
Requiring all four frequencies in Table~\ref{tab:frequencies} to be non-negative integers restricts the overlap to the admissible range
\begin{align}
\label{eq:Lambda}
    \max(0,\,\tilde 1_a+\tilde 1_b-n)\;\le\; t \;\le\; \min(\tilde 1_a,\tilde 1_b),
\end{align}
and we write $\Lambda(n,\tilde 1_a,\tilde 1_b)\subset\mathbb{Z}$ for this set.
With the convention $\binom{m}{j}=0$ outside $0\le j\le m$, sums over $t$ may equivalently be taken over all integers.

Fixing $n$ and $\tilde{1}_a$, we treat $\tilde{1}_b$ as the random variable and $t$ as a hidden (nuisance) parameter to be summed over, using both a forward ensemble ($s^1_a \to s^1_b$) and a backward ensemble ($s^1_b \to s^1_a$), obtained by exchanging the roles of Alice and Bob. The reason for carrying the backward ensemble alongside the forward one is epistemic and is spelled out below Eq.~\eqref{eq:probability}: postulate (II) denies sequence-level knowledge of \emph{either} endpoint, and only the pair of ensembles treats input and output on the same incomplete-knowledge footing. The elementary ensemble cardinalities, normalizing $|S^2|$ by the size of the appropriate conditioning set, are
\begin{align}
\label{eq abs epsilon}
    |\varepsilon_a| \equiv \frac{|S^2|}{|S^1(n,\tilde{1}_a)|}, \qquad |\varepsilon_b| \equiv \frac{|S^2|}{|S^1(n,\tilde{1}_b)|}.
\end{align}
Substituting the frequencies of Table~\ref{tab:frequencies} into Eq.~\eqref{S2 in terms of f00 etc} and simplifying the factorials gives the closed forms
\begin{align}
\label{eq:eps closed}
    |\varepsilon_a| = \binom{\tilde 1_a}{t}\binom{n-\tilde 1_a}{\tilde 1_b - t},
    \qquad
    |\varepsilon_b| = \binom{\tilde 1_b}{t}\binom{n-\tilde 1_b}{\tilde 1_a - t}.
\end{align}
The interpretation of $|\varepsilon_a|$ is elementary: given a fixed input sequence with $\tilde 1_a$ ones, it counts the number of output sequences with $\tilde 1_b$ ones that overlap the input in exactly $t$ positions---choose which $t$ of the input's ones are shared, and which $\tilde 1_b - t$ of the input's zeros are flipped.

\subsection{Unsigned counting (no interference) fails}
\label{sec:naive}

A natural first attempt at a transition probability is to weight every hidden channel $t$ equally: sum each ensemble over $t\in\Lambda$ with unit weight and take the normalized product,
\begin{align}
\label{eq:naive}
    P^{\rm cl}(\tilde 1_b\mid n,\tilde 1_a)
    \equiv \frac{1}{2^n}
    \Big[\sum_{t\in\Lambda} |\varepsilon_a|\Big]
    \Big[\sum_{t\in\Lambda} |\varepsilon_b|\Big].
\end{align}
This fails, and it fails by an exactly computable amount. By the Vandermonde convolution, each unsigned sum evaluates in closed form,
\begin{align}
\label{eq:vandermonde}
\begin{split}
    \sum_{t\in\Lambda}\binom{\tilde 1_a}{t}\binom{n-\tilde 1_a}{\tilde 1_b-t} &= \binom{n}{\tilde 1_b},\\
    \sum_{t\in\Lambda}\binom{\tilde 1_b}{t}\binom{n-\tilde 1_b}{\tilde 1_a-t} &= \binom{n}{\tilde 1_a},
\end{split}
\end{align}
so that $P^{\rm cl}(\tilde 1_b\mid n,\tilde 1_a) = \binom{n}{\tilde 1_b}\binom{n}{\tilde 1_a}/2^n$ and
\begin{align}
\label{eq:overshoot}
    \sum_{\tilde 1_b=0}^{n} P^{\rm cl}(\tilde 1_b\mid n,\tilde 1_a) = \binom{n}{\tilde 1_a}.
\end{align}
The unsigned measure therefore exceeds the capacity bound of postulate (III) by exactly the factor $\binom{n}{\tilde 1_a}$, which is $>1$ for every $0<\tilde 1_a<n$ and grows polynomially, as $n^{\tilde 1_a}/\tilde 1_a!$, at fixed excitation. The physical reading is that $t$ (equivalently $d_{ab}$) is a nonlocal quantity---nonlocal in the precise sense that its value is a joint property of the input \emph{and} the output sequence, unknowable from either endpoint alone---and it is hidden from an observer restricted by postulate (II): all configurations sharing $(\tilde 1_a,\tilde 1_b)$ but differing in $t$ are observationally indistinguishable, and unit-weight summation multiply counts them. Figure~\ref{fig:capacity}(a) displays this overshoot. Some non-trivial weighting $w(t)$ is therefore unavoidable if the resulting object is to be a valid probability at all; the question is which weighting, and whether it is unique. Sections~\ref{sec:forced} and~\ref{sec:rigidity} answer both.

\begin{figure}[t]
    \centering
    % NOTE: rename image files before submission (e.g. fig_capacity_overshoot.png);
    % paths kept from the previous draft so the document compiles as-is.
    \includegraphics[width=0.95\columnwidth]{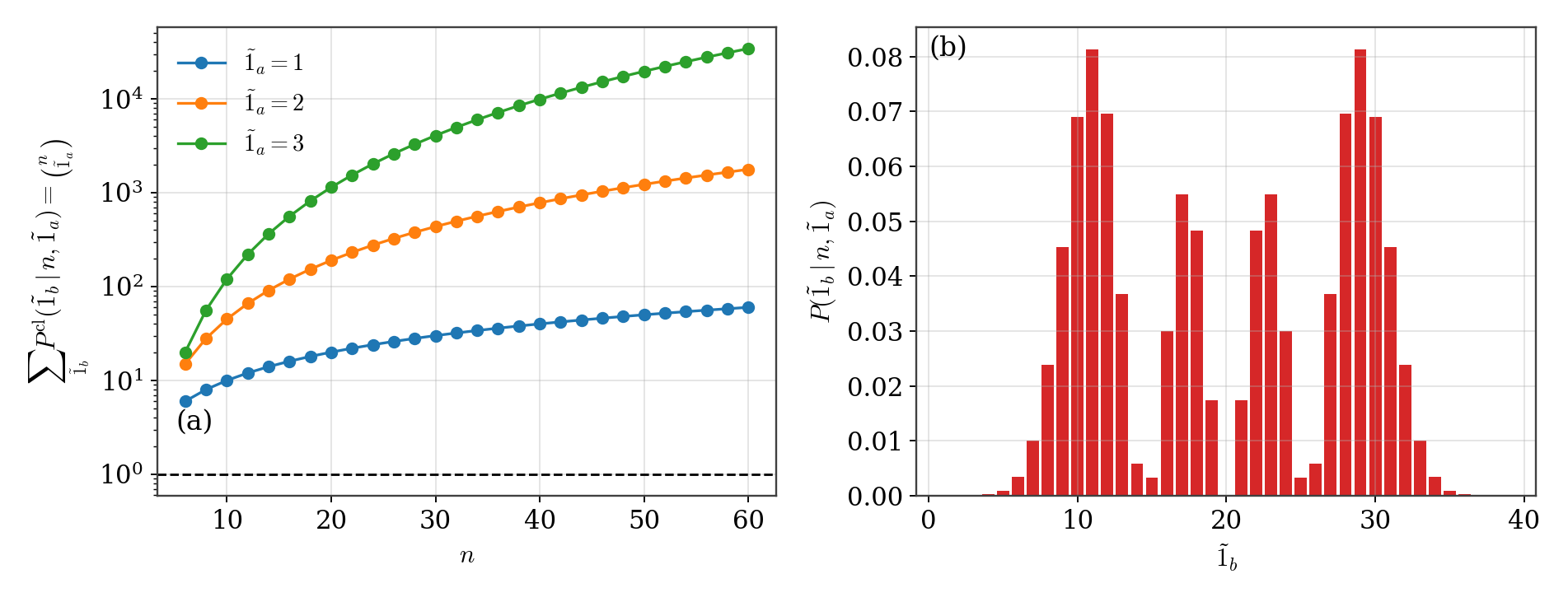}
    \caption{(a) The unsigned sum $\sum_{\tilde{1}_b} P^{\rm cl}$ of Eq.~\eqref{eq:naive}, which evaluates exactly to $\binom{n}{\tilde 1_a}$ [Eq.~\eqref{eq:overshoot}], exceeds the capacity bound of $1$ for every $0<\tilde 1_a<n$ and grows with $n$ at fixed excitation level (shown for $\tilde 1_a=1,2,3$); a non-trivial weighting is therefore unavoidable, not optional. (b) The signed weighting forced by Theorem~\ref{thm:rigidity} (here for $n=40$, $\tilde{1}_a=3$) produces a valid, normalized probability distribution with the correct $k=3$ nodal structure.}
    \label{fig:capacity}
\end{figure}

\subsection{The postulates force the Hamming scheme and its relabeling symmetry}
\label{sec:forced}

We now make precise what, in the postulates above, is genuinely forced and requires no further assumption. This part is deliberately restricted to statements that do not reference probability at all: it is pure counting and representation theory on $\{0,1\}^n$.

The direct connection to the Delsarte association scheme~\cite{Delsarte1998AssociationSA} is made explicit by identifying $S^1(n)$ with the elementary abelian group $(\mathbb{Z}_2)^n$. For $d=0,\ldots,n$, let $R_d \subset S^1(n)\times S^1(n)$ be the set of pairs at Hamming distance $d$, and let $A_d$ be the associated $2^n\times 2^n$ relation (adjacency) matrix, $(A_d)_{s^1_a,s^1_b} = 1$ if $(s^1_a,s^1_b)\in R_d$ and $0$ otherwise. The collection $\{R_d\}_{d=0}^n$ is the Hamming association scheme.

That $\{R_d\}_{d=0}^n$ forms an association scheme, and in particular that the $A_d$ pairwise commute, follows purely from the group structure of $(\mathbb{Z}_2)^n$ and the fact that Hamming distance is a class function of the group difference $s^1_a \oplus s^1_b$; it requires nothing beyond postulate (I) (sequences of length $n$) and postulate (II) (only counts, i.e. only which distance class a pair falls into, are observable to begin with). This is the sense in which the scheme itself is forced: postulates (I) and (II) are precisely the statement that the observable data about a pair $(s^1_a,s^1_b)$ is which $R_d$ it belongs to, nothing finer.

The commuting family $\{A_d\}_{d=0}^n$ is simultaneously diagonalized by the characters of $(\mathbb{Z}_2)^n$,
\begin{align}
    \chi_S(x) = (-1)^{\sum_{i\in S} x_i}, \qquad S \subseteq \{1,\ldots,n\},
\end{align}
with eigenvalue
\begin{align}
\label{eq:scheme eigenvalue}
    A_d\,\chi_S = K_d(|S|;n)\,\chi_S,
\end{align}
where $K_d(x;n) \equiv \sum_{j=0}^d (-1)^j\binom{x}{j}\binom{n-x}{d-j}$ is the Krawtchouk polynomial of degree $d$. This is a standard fact about the Bose--Mesner algebra of the Hamming scheme~\cite{Delsarte1998AssociationSA}: the $\chi_S$ are exactly the irreducible characters of the group $(\mathbb{Z}_2)^n$, every $A_d$ is a class-sum operator, and characters of an abelian group simultaneously diagonalize every element of its group algebra. The eigenvalue $K_d(|S|;n)$ depends on $S$ only through $|S|$, so we write it as a function of $|S|$ and $n$ alone from here on.

An important structural fact follows immediately: the eigenvalues $K_d(\,\cdot\,;n)$ are real, because $(\mathbb{Z}_2)^n$ is an elementary abelian $2$-group---every element is its own inverse, so every irreducible character is $\pm1$-valued---and for $d\geq 1$ they change sign as functions of their argument. Figure~\ref{fig:scheme}(a) makes this checkerboard sign pattern explicit for $n=32$. The sign structure that will become interference is therefore already latent in the joint eigenbasis that postulates (I) and (II) force; what remains to be shown is that a valid transition probability has no choice but to use it.

\begin{figure}[t]
    \centering
    \includegraphics[width=0.95\columnwidth]{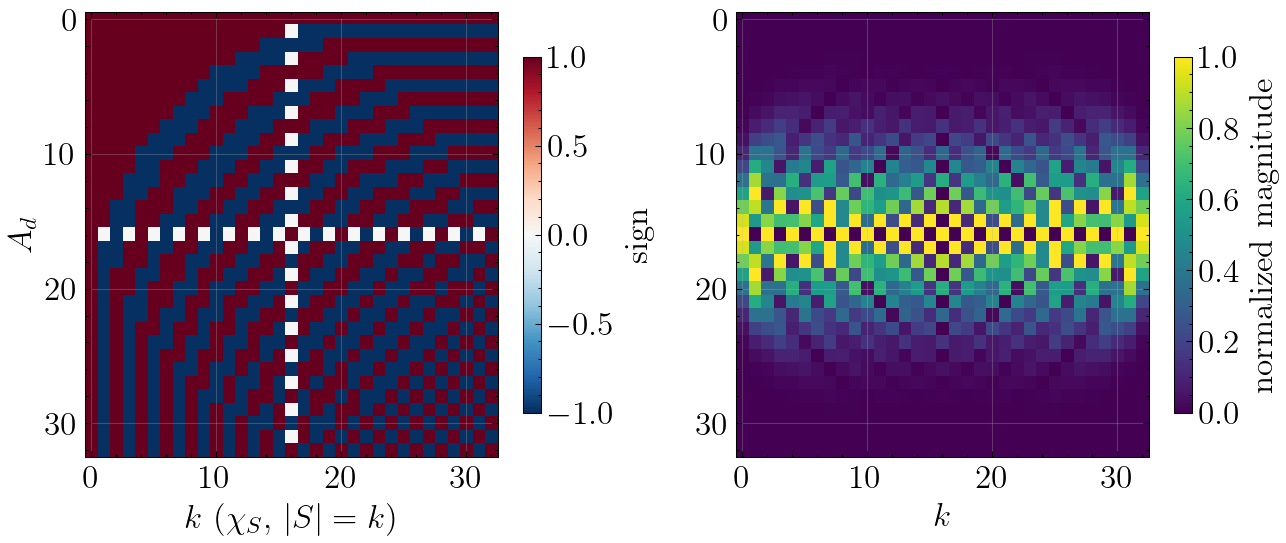}
    \caption{(a) Sign of $K_d(k;n)$ for $n=32$: the $\pm1$ character structure of the joint eigenbasis forced by postulates (I) and (II). (b) Column-normalized magnitude $|K_d(k;n)|$, showing the fringe pattern whose zero-crossings in (a) become the nodal structure of the Krawtchouk/QHO eigenstates derived below.}
    \label{fig:scheme}
\end{figure}

Postulates (I) and (II) reference only counts of symbols; nothing in them distinguishes the symbol $1$ from the symbol $0$. Consequently, the complementation map $C$ that flips every bit of both $s_a$ and $s_b$ is an exact symmetry of the counting problem: it preserves the Hamming distance, $\dist(s_a^c, s_b^c) = \dist(s_a, s_b)$, permutes the four symbol frequencies as $f_{00}\leftrightarrow f_{11}$, $f_{01}\leftrightarrow f_{10}$, and leaves $|S^2|$ unchanged, while sending the sector labels $(\tilde 1_a,\tilde 1_b) \to (n-\tilde 1_a,\,n-\tilde 1_b)$. Any transition probability built from this data must therefore satisfy the bit-relabeling invariance
\begin{align}
\label{eq:relabeling}
    P(\tilde{1}_b \mid n, \tilde{1}_a) = P(n-\tilde{1}_b \mid n, n-\tilde{1}_a),
\end{align}
on the same footing as the Alice/Bob exchange symmetry that motivates the bilinear ansatz of the next section. Note also, for later use, that complementation acts on the hidden overlap as $t = f_{11} \mapsto f_{00} = t + (n - \tilde 1_a - \tilde 1_b)$: relabeling the symbols \emph{shifts} the overlap by the sector-dependent amount $n-\tilde 1_a-\tilde 1_b$. This shift action is the mechanism through which Eq.~\eqref{eq:relabeling} constrains the $t$-dependence of any admissible weighting.

\subsection{The rigidity theorem}
\label{sec:rigidity}

We now introduce the one genuinely modeling ingredient of this paper and state precisely what follows once it is adopted. Motivated by the physical symmetry between the two endpoints of a transition---exchanging the roles of Alice and Bob should not be privileged in either direction---we construct the transition measure as a product of a forward and a backward weighted cardinality,
\begin{align}
\label{eq:bilinear-ansatz}
\begin{split}
    \Upsilon_a(n,\tilde{1}_a,\tilde{1}_b) &\equiv \sum_{t\in\Lambda} w(t)\,|\varepsilon_a|,\\
    \Upsilon_b(n,\tilde{1}_a,\tilde{1}_b) &\equiv \sum_{t\in\Lambda} w(t)\,|\varepsilon_b|,
\end{split}
\end{align}
with the candidate probability defined as
\begin{align}
    \label{eq:probability}
    P(\tilde{1}_b\mid n,\tilde{1}_a) \equiv \frac{\Upsilon_a\,\Upsilon_b}{2^n},
\end{align}
for some real weighting function $w(t)$ to be determined.

The bilinear structure can be motivated directly from the observability restriction. Consider building the measure from the forward ensemble alone: fix an input sequence and count the admissible outputs. Such a construction tacitly treats the initial state as known exactly---as a single, fully resolved sequence---while all uncertainty is assigned to the final state. But postulate (II) denies any observer this knowledge: the input, no less than the output, is accessible only through its symbol counts, and pretending otherwise would privilege information that no measurement in the model can supply. In quantum-mechanical language, it would be the combinatorial analog of specifying the initial state with a precision the uncertainty principle forbids. The division by $|S^1(n,\tilde 1_a)|$ in Eq.~\eqref{eq abs epsilon} already softens this for the forward ensemble---the input enters as a uniform ensemble over all sequences consistent with $\tilde 1_a$, not as an individually known sequence---but a residual asymmetry remains: the forward ensemble marginalizes the input and counts the outputs, while nothing yet constrains knowledge of the final state in the same way. The backward ensemble compensates: it fixes the output count and ranges over the inputs consistent with it. In the product of Eq.~\eqref{eq:probability} the two endpoints of the transition then enter on exactly the same incomplete-knowledge footing---each specified by its observable counts and nothing more---so the epistemic restriction of postulate (II) is applied symmetrically to preparation and outcome. This is also the sense in which the exchange symmetry invoked above is not merely aesthetic: privileging either direction would amount to claiming sequence-level knowledge of one endpoint.

Equation~\eqref{eq:bilinear-ansatz} is the simplest bilinear form with this two-sided epistemic symmetry that is compatible with postulates (I)--(III), but we have not shown, and do not claim, that it is the only construction compatible with those postulates. What we do show is that once Eqs.~\eqref{eq:bilinear-ansatz}--\eqref{eq:probability} are adopted, the weighting $w$ is no longer a free choice at all.

Before stating the theorem, we dispose of an apparent ambiguity in the choice of summation variable. Since $\tilde{1}_a=f_{11}+f_{10}$ and $\tilde{1}_b=f_{11}+f_{01}$ are fixed within a sector, the parities of the four symbol frequencies are locked together: $(-1)^{f_{10}}=(-1)^{\tilde{1}_a}(-1)^{f_{11}}$, $(-1)^{f_{01}}=(-1)^{\tilde{1}_b}(-1)^{f_{11}}$, and $(-1)^{f_{00}}=(-1)^{n-\tilde{1}_a-\tilde{1}_b}(-1)^{f_{11}}$. A parity weighting hung on any one of the four frequencies therefore differs from one hung on $f_{11}$ only by a global sign depending on the fixed sector labels $(\tilde{1}_a,\tilde{1}_b)$, not on the summation variable itself. Because that prefactor multiplies $\Upsilon_a$ and $\Upsilon_b$ identically, it squares to unity in the product Eq.~\eqref{eq:probability}, leaving $P$ invariant across all four choices. This is a gauge-like freedom of the parametrization; we fix the gauge by weighting $t=f_{11}$, which is manifestly symmetric under $s_a\leftrightarrow s_b$.

\begin{theorem}[Rigidity of the weighting]
\label{thm:rigidity}
Let $n\ge 2$, let $w:\{0,\ldots,n\}\to\mathbb{R}$, and let $\Upsilon_a$, $\Upsilon_b$, and $P$ be defined by Eqs.~\eqref{eq:bilinear-ansatz}--\eqref{eq:probability}. Suppose that
\begin{itemize}
\item[\bf (N)] $P$ is normalized in every sector: $\sum_{\tilde 1_b=0}^n P(\tilde 1_b\mid n,\tilde 1_a) = 1$ for every $\tilde 1_a \in \{0,\ldots,n\}$; and
\item[\bf (C)] $P$ is invariant under bit relabeling, Eq.~\eqref{eq:relabeling}.
\end{itemize}
Then
\begin{align}
\label{eq:forced w}
w(t) = \sigma\,(-1)^t \quad \text{for all } 0\le t\le n-1,
\end{align}
for a single overall sign $\sigma\in\{\pm1\}$, and $w(n)=\pm(-1)^n$. Neither $\sigma$ nor the sign of $w(n)$ affects $P$: the overall sign cancels in the bilinear product, and $w(n)$ enters $P$ only quadratically, through the single sector $(\tilde 1_a,\tilde 1_b)=(n,n)$. Consequently $P$ itself is unique, non-negativity holding automatically for the forced weighting.
\end{theorem}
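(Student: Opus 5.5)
The plan is to reduce the bilinear measure to a single square and then argue by induction on the input sector. First I will use the hypergeometric identity $\binom{n}{\tilde 1_a}\binom{\tilde 1_a}{t}\binom{n-\tilde 1_a}{\tilde 1_b-t}=\binom{n}{\tilde 1_b}\binom{\tilde 1_b}{t}\binom{n-\tilde 1_b}{\tilde 1_a-t}$, which says that both sides count triples (input, output, overlap). Applied term by term in $t$ to Eq.~\eqref{eq:eps closed}, it gives $\Upsilon_b=\Upsilon_a\binom{n}{\tilde 1_a}/\binom{n}{\tilde 1_b}$ for every $w$. Hence
\begin{align}
P(\tilde 1_b\mid n,\tilde 1_a)=\frac{\binom{n}{\tilde 1_a}}{2^n\binom{n}{\tilde 1_b}}\,\Upsilon_a^2 .
\end{align}
This already settles the last clause of the theorem: $P\ge 0$ for any real $w$, and $P$ is insensitive to a global sign of $w$. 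With $a\equiv\tilde 1_a$ fixed, the normalization sum is a quadratic form $Q_a(w)$ that involves only $w(0),\dots,w(a)$, because $\binom{a}{t}=0$ for $t>a$. This triangular structure is what makes an induction on $a$ natural.

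Next I will fix the magnitudes. In sector $a=0$ only $t=0$ contributes, so $\Upsilon_a=w(0)\binom{n}{b}$ and (N) gives $w(0)^2=1$; write $w(0)=\sigma$. In sector $a=n$ we have $\Upsilon_a=w(b)\binom{n}{b}$, so $P(b\mid n,n)=w(b)^2\binom{n}{b}/2^n$. Relabeling (C) equates this with $P(n-b\mid n,0)=\binom{n}{b}/2^n$, and therefore $|w(t)|=1$ for every $t$. Only the signs now remain undetermined.

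For the signs I will use the forced candidate $w^*(t)=\sigma(-1)^t$. For this choice $\Upsilon_a=\sigma K_{b}(a;n)$, by the definition of the Krawtchouk polynomial in Eq.~\eqref{eq:scheme eigenvalue}. Krawtchouk orthogonality, $\sum_b K_b(a)^2/\binom{n}{b}=2^n/\binom{n}{a}$, which is character orthogonality for the scheme of Sec.~\ref{sec:forced}, then gives $Q_a(w^*)=1$.

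The induction hypothesis is that $w(t)=w^*(t)$ for all $t<a$, where $1\le a\le n-1$. Suppose instead that $w(a)=-w^*(a)$, and write $w=w^*+\delta$ with $\delta=-2\sigma(-1)^a e_a$. Expanding the quadratic form,
\begin{align}
Q_a(w)=1+2B_a(w^*,\delta)+Q_a(\delta).
\end{align}
Using the same binomial identity, $\binom{n}{a}\binom{n-a}{b-a}=\binom{n}{b}\binom{b}{a}$, the cross term becomes $B_a(w^*,e_a)=\sigma 2^{-n}\sum_b\binom{b}{a}K_b(a;n)$. I will evaluate this with the generating function $\sum_b K_b(a)z^b=(1-z)^a(1+z)^{n-a}$: taking $a$ derivatives at $z=1$ gives $\sum_b\binom{b}{a}K_b(a)=(-1)^a2^{n-a}$. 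Consequently $2B_a(w^*,\delta)=-4\cdot 2^{-a}$.

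The main obstacle is to show that the positive term strictly beats this negative cross term. Here $Q_a(\delta)=4\cdot 2^{-n}\sum_{j\ge 0}\binom{a+j}{a}\binom{n-a}{j}$. Every factor $\binom{a+j}{a}$ is at least $1$, and it is strictly larger than $1$ for $j\ge1$ when $a\ge1$; the $j\ge 1$ terms exist precisely because $n-a\ge1$. Together these give $Q_a(\delta)>4\cdot 2^{-n}\cdot 2^{n-a}=4\cdot 2^{-a}$. Hence $Q_a(w)>1$, an explicit strict overshoot that violates (N). This forces $w(a)=\sigma(-1)^a$ for all $a\le n-1$ and establishes Eq.~\eqref{eq:forced w}.

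For $a=n$ only the $j=0$ term survives, so the inequality becomes an equality and $w(n)=\pm(-1)^n$ is left free, consistent with $|w(n)|=1$. As the theorem states, $w(n)$ enters $P$ only through $\Upsilon_a(n,n)^2=w(n)^2$, so its sign does not affect $P$. I expect the generating-function evaluation of the cross term, and the check that it exactly cancels the $j=0$ part of $Q_a(\delta)$, to be the delicate bookkeeping step.
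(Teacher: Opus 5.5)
Your proof is correct and is essentially the paper's own argument. It uses the same magnitude steps ($w(0)^2=1$ from sector $0$, then $w(t)^2=1$ from (C) on the slice $(0,b)\leftrightarrow(n,n-b)$) and the same induction, in which a flipped $w(a)$ produces the cross term $-4\cdot 2^{-a}$ and the square term $4\cdot 2^{-n}S_3$ with $S_3>2^{n-a}$, reproducing the overshoot of Eq.~\eqref{eq:overshoot-formula}. The differences are minor: your $w$-independent identity $\Upsilon_b=\Upsilon_a\binom{n}{a}/\binom{n}{b}$ absorbs the paper's $S_1=S_2$ step and shows $P\ge 0$ for every real $w$, while differentiating the generating function at $z=1$ and the termwise bound $\binom{a+j}{a}\ge 1$ replace Lemmas~\ref{lem:gf}--\ref{lem:sums}; the paper additionally checks, via Eq.~\eqref{eq:reflections}, that the forced weighting satisfies (C) in every sector, which the uniqueness statement does not strictly require.
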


The proof is elementary but instructive, and we give it in full in Appendix~\ref{app:proof}. Its three steps are worth sketching here, because each step uses a different hypothesis and the division of labor is the conceptual content of the theorem.

\emph{Step 1 (normalization pins the scale).} In the ground sector $\tilde 1_a=0$ only the $t=0$ channel exists, and normalization forces $w(0)^2=1$; we set $w(0)=+1$ without loss of generality.

\emph{Step 2 (relabeling symmetry pins the magnitude).} Applying the relabeling invariance (C) to the sector pair $(\tilde 1_a,\tilde 1_b)=(0,b) \leftrightarrow (n,n-b)$ yields $P(n-b\mid n,n) = w(n-b)^2\binom{n}{b}/2^n$, which must equal $P(b\mid n,0)=\binom{n}{b}/2^n$. Hence $w(t)^2=1$ for \emph{every} $t$: the weighting is a pure sign pattern, $w(t)\in\{\pm1\}$. This step is where relabeling symmetry does irreplaceable work; normalization alone is not enough. Indeed, at $\tilde 1_a=1$ normalization admits, besides $w(1)=-1$, the spurious real root $w(1)=-(n-3)/(n+1)$ [Appendix~\ref{app:proof}, Remark~\ref{rem:spurious}], which has $|w(1)|\neq1$ and is eliminated precisely by (C).

\emph{Step 3 (normalization pins every sign, with a strict overshoot).} Proceeding by induction on the sector $\tilde 1_a=1,2,\ldots,n-1$, suppose $w(t)=(-1)^t$ for all $t<\tilde 1_a$. Both $\Upsilon_a$ and $\Upsilon_b$ in sector $\tilde 1_a$ involve $w(t)$ only for $t\le\tilde 1_a$, so a single sign is undetermined. A closed-form computation (Appendix~\ref{app:proof}) shows that choosing the wrong sign, $w(\tilde 1_a)=-(-1)^{\tilde 1_a}$, gives
\begin{align}
\label{eq:overshoot-formula}
\begin{split}
    \sum_{\tilde 1_b} \Upsilon_a\Upsilon_b &= 2^n + 4\big(S_3 - 2^{\,n-\tilde 1_a}\big),\\
    S_3 &= \sum_{j\ge0}\binom{\tilde 1_a}{j}\binom{n-\tilde 1_a}{j}\,2^{\,n-\tilde 1_a-j},
\end{split}
\end{align}
and $S_3 > 2^{n-\tilde 1_a}$ strictly for $1\le\tilde 1_a\le n-1$. Every deviation from the alternating sign therefore strictly \emph{overshoots} the capacity bound of postulate (III)---the same failure mode, in miniature, as the fully unsigned measure of Eq.~\eqref{eq:overshoot}. The alternating sign is thus not merely one solution; it is the unique survivor, and the obstruction to every alternative is quantitative.

\begin{remark}
The theorem's hypotheses are exactly the content of postulates (III) and (II) respectively, evaluated on the ansatz: (N) is the capacity bound in normalized form, and (C) is the statement that nothing distinguishes the two symbols. The proof uses only the single slice $\tilde 1_a\in\{0,n\}$ of (C); the forced $P$ then satisfies Eq.~\eqref{eq:relabeling} in full, as one checks directly from the reflection identities of the Krawtchouk polynomials (Appendix~\ref{app:proof}). The shift action $t\mapsto t+(n-\tilde 1_a-\tilde 1_b)$ of complementation on the overlap, noted below Eq.~\eqref{eq:relabeling}, is what ties the value of $w$ at large $t$ to its value at small $t$ and makes the magnitude constraint of Step 2 possible.
\end{remark}

\begin{remark}
The theorem shows that, given the bilinear ansatz of Eqs.~\eqref{eq:bilinear-ansatz}--\eqref{eq:probability}, the weighting responsible for the interference-like sign structure is the only one consistent with the relabeling symmetry forced by postulates (I)--(II) and the capacity bound of postulate (III). It does not show that the ansatz itself is the only way to build a valid transition probability out of the $|S^2|$ cardinalities; that would require showing that every admissible construction factors through the Bose--Mesner algebra of the scheme, which we do not attempt here.
\end{remark}

\subsection{The Krawtchouk transition measure and its continuum limit}
\label{sec:krawtchouk}

We now evaluate the forced measure in closed form and show that it is the Krawtchouk finite-oscillator distribution, converging to the QHO in the continuum limit.

The Krawtchouk polynomial of degree $k$ evaluated at integer $x$ with parameter $n$ is~\cite{coleman2011krawtchouk,kravchuk1929generalisations}
\begin{align}
\label{krawtchouk polynomial definition}
    K_k(x;n) \equiv \sum_{j=0}^k (-1)^j \binom{x}{j}\binom{n-x}{k-j}.
\end{align}
These polynomials are orthogonal on $\{0,\ldots,n\}$ with respect to the binomial weight $\binom{n}{x}2^{-n}$,
\begin{align}
\label{orthogonality}
    \sum_{x=0}^n \binom{n}{x} K_k(x;n)K_l(x;n) = 2^n\binom{n}{k}\delta_{kl},
\end{align}
satisfy the duality relation
\begin{align}
\label{eq:duality}
\binom{n}{x}K_k(x;n) = \binom{n}{k}K_x(k;n),
\end{align}
the reflection identities
\begin{align}
\label{eq:reflections}
\begin{split}
K_k(n-x;n) &= (-1)^k\, K_k(x;n),\\
K_{n-k}(x;n) &= (-1)^x\, K_k(x;n),
\end{split}
\end{align}
the three-term recurrence
\begin{align}
\label{3 term recurrence}
    (k+1)K_{k+1} = (n-2x)K_k - (n-k+1)K_{k-1},
\end{align}
and the difference equation
\begin{align}
\label{difference equation}
    (n-x)K_k(x{+}1;n) + xK_k(x{-}1;n) = (n-2k)K_k(x;n).
\end{align}
These relations follow from general properties of binary Krawtchouk polynomials~\cite{coleman2011krawtchouk}; the reflection identities in Eq.~\eqref{eq:reflections}, which we use below to explain both particle-hole symmetry and an entropy feature at the self-dual point, follow in one line from the generating function $\sum_k K_k(x;n)z^k=(1-z)^x(1+z)^{n-x}$.

With $w(t)=(-1)^{t}$ as forced by Theorem~\ref{thm:rigidity}, the weighted ensembles evaluate to single Krawtchouk polynomials. Indeed, inserting the closed form Eq.~\eqref{eq:eps closed} into the ansatz,
\begin{align}
    \Upsilon_a = \sum_{t\in\Lambda} (-1)^t \binom{\tilde{1}_a}{t}\binom{n-\tilde{1}_a}{\tilde{1}_b-t} = K_{\tilde{1}_b}(\tilde{1}_a;n),
\end{align}
which is the definition~\eqref{krawtchouk polynomial definition} term by term; the companion identity $\Upsilon_b = K_{\tilde{1}_a}(\tilde{1}_b;n)$ follows by exchanging $\tilde 1_a\leftrightarrow\tilde 1_b$. The measure of Eq.~\eqref{eq:probability} is therefore
\begin{align}
\label{eq:P symmetric form}
P(\tilde 1_b\mid n,\tilde 1_a) = \frac{K_{\tilde 1_b}(\tilde 1_a;n)\,K_{\tilde 1_a}(\tilde 1_b;n)}{2^n},
\end{align}
and applying the duality relation Eq.~\eqref{eq:duality} to the first factor,
\begin{align}
    \label{P in terms of Krawtchouk}
    P(\tilde{1}_b\mid n,\tilde{1}_a) = \frac{\binom{n}{\tilde{1}_b}}{\binom{n}{\tilde{1}_a}}\,\frac{\big[K_{\tilde{1}_a}(\tilde{1}_b;n)\big]^2}{2^n},
\end{align}
proportional to the square of a single Krawtchouk polynomial. Positivity is now manifest; normalization follows from the orthogonality relation~\eqref{orthogonality} with $k=l=\tilde{1}_a$. Three structural properties are worth recording explicitly. First, Eq.~\eqref{eq:P symmetric form} is symmetric under $\tilde 1_a\leftrightarrow\tilde 1_b$: the probability of the transition does not depend on which endpoint is called the input, which is the exchange symmetry of the ansatz made manifest. Second, $P$ is doubly stochastic---its columns sum to one as well as its rows, by the dual orthogonality relation---so the uniform distribution is stationary. Third, the first reflection identity in Eq.~\eqref{eq:reflections} gives the particle-hole symmetry $P(\tilde 1_b\mid n,\tilde 1_a)=P(n-\tilde 1_b\mid n,n-\tilde 1_a)$ directly, confirming that the forced weighting realizes the relabeling invariance (C) in full.

Equation~\eqref{P in terms of Krawtchouk} is a combinatorial analog of the Born rule: the normalized amplitude
\begin{align}
\phi_{\tilde{1}_a}(\tilde{1}_b)\equiv \frac{K_{\tilde{1}_a}(\tilde{1}_b;n)}{\sqrt{\binom{n}{\tilde{1}_a}2^n}}\sqrt{\binom{n}{\tilde{1}_b}}
\end{align}
satisfies $P=|\phi_{\tilde{1}_a}|^2$. The quadratic structure itself is already present in the bilinear ansatz of Eq.~\eqref{eq:probability} and is part of the modeling choice, not shown to be unique under postulates (I)--(III) alone. What the forced sign adds is non-trivial: the forward and backward ensembles $\Upsilon_{a}$ and $\Upsilon_b$, built from \emph{independent} conditioning sets in Eq.~\eqref{eq abs epsilon}, are not free but collapse, via duality, into rescaled copies of a single amplitude function---which is why the product of two different objects becomes the square of one. In the ground sector $\tilde 1_a=0$ the amplitude reduces to $\phi_0(\tilde 1_b)=\sqrt{\binom{n}{\tilde 1_b}/2^n}$ and $P$ is the binomial distribution---the discrete Gaussian ground state.

It is worth pausing on where, physically, the interference lives in this construction. The forced sign is a function of the overlap $t=f_{11}$, equivalently of the Hamming distance $d_{ab}=\tilde 1_a+\tilde 1_b-2t$: an integer-valued label that behaves as a quantum number of the \emph{transition}, not of either state. It is nonlocal in a precise and limited sense---its value requires joint knowledge of both the input and the output sequence, and cannot be reconstructed from either endpoint alone---and by postulate (II) it is hidden from the observer, who sees only the endpoint counts $(\tilde 1_a,\tilde 1_b)$. Every observable transition therefore proceeds through a coherent sum over the admissible values of this relational hidden variable, and Theorem~\ref{thm:rigidity} says that the sum has no choice but to alternate in sign. The nodes of the Krawtchouk (and, in the continuum limit, Hermite) distributions are exact cancellations between channels that differ \emph{only} in $d_{ab}$: the $n=4$ example of Appendix~\ref{app:n4} displays a node at $\tilde 1_b=2$ produced by two channels, $d_{ab}\in\{1,3\}$, of equal cardinality and opposite parity. This gives interference a transparent combinatorial reading: it is not an added wave-dynamical ingredient, but the unique consistent bookkeeping over an unobservable degree of freedom that belongs to the pair of states jointly. We stress that ``nonlocal'' here refers to this relational, two-endpoint character---the variable spans the initial and final states of a single transition---and carries no implication of superluminal signaling or Bell-type nonlocality between spacelike-separated systems~\cite{bell1964epr}; the statement concerns a single register.

$\Upsilon_a$ coincides with the $p$-number $p_{\tilde{1}_b}$ of the binary Hamming association scheme $H_2^n$~\cite{Delsarte1998AssociationSA}. Figure~\ref{fig:distributions} shows the resulting distributions for $n=200$ and the first four excitation levels, reproducing the nodal structure of the QHO energy eigenstates; the worked $n=4$ example in Appendix~\ref{app:n4} verifies every step by direct integer counting.

\begin{figure}[t]
    \centering
    \includegraphics[width=\columnwidth]{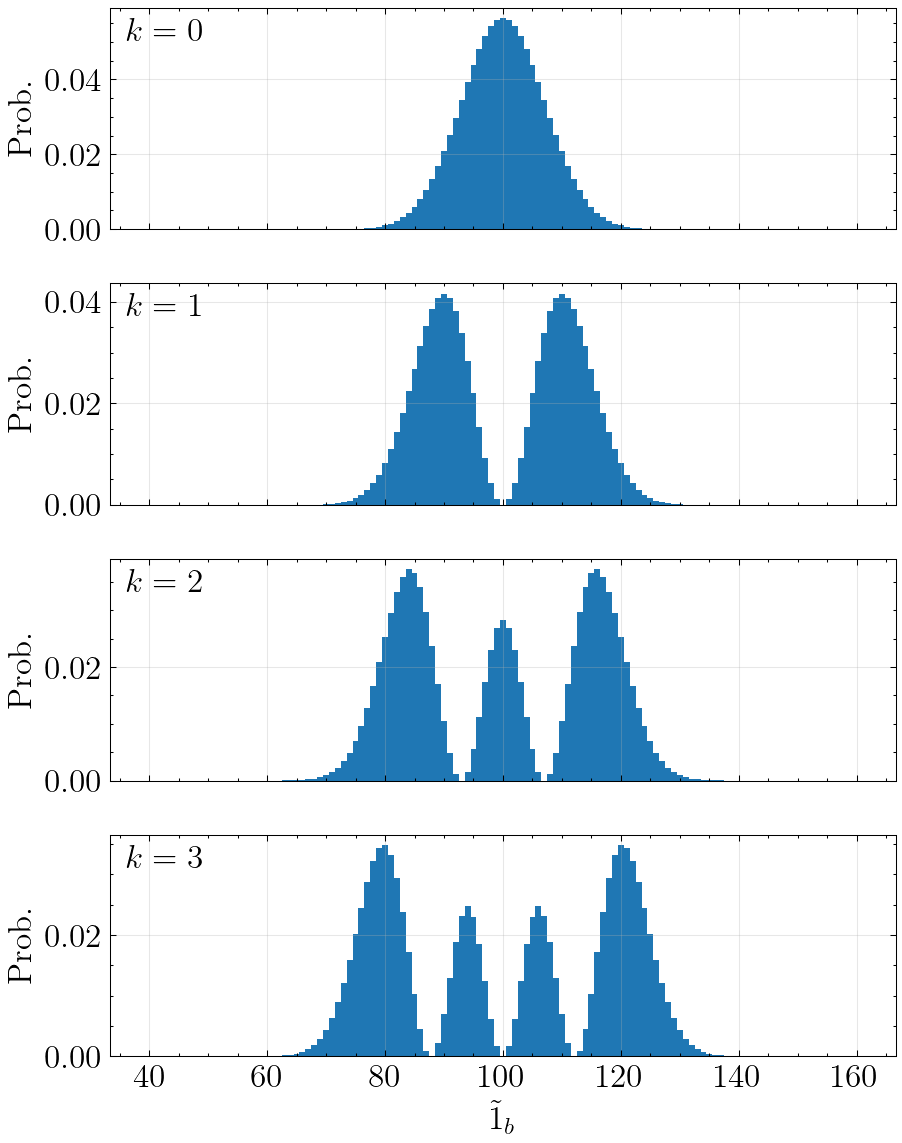}
    \caption{Probabilities $P$ [Eq.~\eqref{P in terms of Krawtchouk}] for $n=200$, for the first four excitation levels $\tilde{1}_a\in\{0,1,2,3\}$, reproducing the nodal structure of the QHO energy eigenstates. $\tilde{1}_a$ plays the role of the QHO energy level; the centered, rescaled output count $u=(n-2\tilde 1_b)/\sqrt{2n}$ plays the role of position.}
    \label{fig:distributions}
\end{figure}

\subsubsection{Discrete eigenvalue problem}

Writing $\psi_k(x)\equiv K_k(x;n)$, the difference equation~\eqref{difference equation} is a discrete eigenvalue problem,
\begin{align}
    \label{eigval problem}
    (n-x)\psi_k(x+1) + x\psi_k(x-1) = (n-2k)\psi_k(x),
\end{align}
i.e. $\hat{H}_{\rm disc}\psi_k = E_k\psi_k$ with
\begin{align}
    \label{hamiltonian and eigenvalue}
    \hat{H}_{\rm disc} = (n-x)T^+ + xT^-, \qquad E_k = n-2k,
\end{align}
where $T^\pm f(x) = f(x\pm1)$ are unit-shift operators. The spectrum $E_k=n,n-2,n-4,\ldots,-n$ is exactly equally spaced---the hallmark of a harmonic oscillator---with $k\equiv\tilde{1}_a$ playing the role of the principal quantum number. Two conventions deserve a remark. First, $E_k$ \emph{decreases} with $k$; the operator with the standard orientation and ground-state energy is obtained by the trivial affine shift $\hat H_{\rm osc}\equiv\tfrac12\big(n-\hat H_{\rm disc}\big)$, with spectrum $\hat H_{\rm osc}\psi_k = k\,\psi_k$, matching the QHO ladder $k=0,1,2,\ldots$ up to the zero-point constant. Second, $\hat H_{\rm disc}$ is not symmetric on plain $\ell^2(\{0,\ldots,n\})$; it is self-adjoint with respect to the binomial weight $\binom{n}{x}2^{-n}$, equivalently it becomes a real symmetric (Jacobi) matrix after the similarity transform $\psi\to\sqrt{\binom{n}{x}}\,\psi$ that produces the normalized functions used below. Its eigenvalues are therefore real and its eigenfunctions orthogonal in exactly the inner product with respect to which Eq.~\eqref{orthogonality} is an orthogonality statement.

\subsubsection{Continuum limit}

Introduce the centered, rescaled variable
\begin{align}
\label{eq:xi def}
\xi\equiv\frac{n-2x}{\sqrt{n}},
\end{align}
so that a unit shift $x\to x\pm1$ corresponds to $\xi\to\xi\mp\delta\xi$ with the full step $\delta\xi=2/\sqrt{n}\to0$ as $n\to\infty$. Taylor-expanding through second order, $\psi(x\pm1)=\psi\mp\psi'\,\delta\xi+\tfrac12\psi''\,\delta\xi^2\pm\mathcal{O}(\delta\xi^3)$, the two sides of Eq.~\eqref{eigval problem} combine as
\begin{align}
\label{TAYLOR EQ}
n\psi - (n-2x)\,\delta\xi\,\psi' + \tfrac{n}{2}\delta\xi^{2}\psi''
 + \mathcal{O}(n^{-1/2}) = (n{-}2k)\psi .
\end{align}
Using $n-2x=\xi\sqrt n$ and $\delta\xi=2/\sqrt n$, the surviving terms at $\mathcal{O}(1)$ give
\begin{align}
    \label{diff eq}
    \psi''(\xi) - \xi\psi'(\xi) + k\psi(\xi) = 0,
\end{align}
with the neglected terms of relative order $\mathcal{O}(1/n)$ [the cubic term carries $(n-2x)\,\delta\xi^3\sim\xi/n$ and the quartic term $n\,\delta\xi^4\sim1/n$], producing the $\mathcal{O}(1/n)$ corrections to the probability distributions that we check numerically in Sec.~\ref{sec:numerics}. Under the further rescaling $\xi=\sqrt{2}\,u$, i.e.
\begin{align}
\label{eq:u def}
u = \frac{n-2x}{\sqrt{2n}},
\end{align}
Eq.~\eqref{diff eq} is the Hermite differential equation~\cite{arfken2013mathematical}
\begin{align}
    \label{hermite eq}
    \psi''(u) - 2u\psi'(u) + 2k\psi(u) = 0,
\end{align}
with polynomial solutions $H_k(u)$. Writing $\psi(u)=\Phi(u)e^{+u^2/2}$,
\begin{align}
    \label{oscillator}
    -\frac{d^2\Phi}{du^2} + u^2\Phi = (2k+1)\Phi,
\end{align}
the time-independent Schr\"{o}dinger equation for the QHO in natural units, with $\Phi_k(u)\propto H_k(u)e^{-u^2/2}$ the familiar eigenfunctions and $u$ the centered, rescaled continuum limit of the output count, Eq.~\eqref{eq:u def} with $x=\tilde 1_b$. We use the single variable $u$ of Eq.~\eqref{eq:u def} in everything that follows, so that the difference-equation route and the recurrence route below land on the same Hermite functions with the same orientation.

Why does the parity sign, forced at finite $n$ to keep the effective cardinality pinned to $2^n$, survive in the limit $n\to\infty$ where no finite bound applies? The two regimes enforce the same mechanism in different guises: at finite $n$ the sign alternation cancels indistinguishable contributions so the observable cardinality never exceeds $2^n$; in the continuum limit the absolute bound is rescaled away, but the alternation, which steps once per unit change in $f_{11}$, smooths into the oscillatory nodal structure of $\Phi_k(u)$. Interference is not separately imposed in the limit; it is the smooth large-$n$ realization of the same constraint that was mandatory at every finite $n$.

The three-term recurrence Eq.~\eqref{3 term recurrence} gives a complementary view. Introducing the normalized Krawtchouk function $\phi_k\equiv K_k/\sqrt{\binom{n}{k}2^n}$, the exact recurrence reads
\begin{align}
\label{eq:exact-recurrence}
\sqrt{(k+1)(n-k)}\,\phi_{k+1} ={}& (n-2x)\,\phi_k \nonumber\\
&- \sqrt{k(n-k+1)}\,\phi_{k-1},
\end{align}
and dividing by $\sqrt{n}$ and using Eq.~\eqref{eq:u def}, it converges for $k\ll n$ to
\begin{align}
    \label{asymtpotic 3 term recurrence}
    \sqrt{k+1}\,\phi_{k+1}(u) \approx \sqrt{2}\,u\,\phi_k(u) - \sqrt{k}\,\phi_{k-1}(u),
\end{align}
which is precisely the recurrence of the normalized Hermite functions, equivalent to $H_{k+1}=2uH_k-2kH_{k-1}$~\cite{arfken2013mathematical}. The raising and lowering operators of the QHO thus emerge from the shift structure of the Krawtchouk recurrence in the large-$n$ limit. In summary,
\begin{align}
    \frac{\Upsilon_a\Upsilon_b}{2^n} = \frac{K_{\tilde{1}_b}(\tilde{1}_a;n)\,K_{\tilde{1}_a}(\tilde{1}_b;n)}{2^n} \xrightarrow{\;n\to\infty\;} |\Phi_{\tilde{1}_a}(u)|^2\,\Delta u,
\end{align}
with $\Delta u = 2/\sqrt{2n}$ the lattice spacing in $u$. This limit holds at fixed excitation $k$ as $n\to\infty$ (so $k/n\to0$); it is \emph{not} a statement that holds merely for large $n$ at fixed ratio $k/n$, and the two regimes behave differently, as Fig.~\ref{fig:scaling} shows directly.

\section{Numerical validation}
\label{sec:numerics}

Figure~\ref{fig:distributions} shows the probability distributions for $n=200$, reproducing the $k$-node pattern of the corresponding QHO eigenstates, and the worked $n=4$ example (Appendix~\ref{app:n4}) verifies every number by direct integer counting. The resulting distribution is exactly particle-hole symmetric, $P(\tilde{1}_b\mid n,\tilde{1}_a) = P(n-\tilde{1}_b\mid n,n-\tilde{1}_a)$, and this feature is not new: a bounded, mirror-symmetric spectrum is the defining property of the Krawtchouk finite oscillator, known since Ref.~\cite{Atakishiyev2005}. In the present construction it is one of the two explicit hypotheses that force the sign in the first place [condition (C) of Theorem~\ref{thm:rigidity}], alongside normalization. What the rigidity theorem adds is the demonstration that these two requirements together leave no freedom beyond the forced weighting.

\begin{figure}[t]
    \centering
    \includegraphics[width=0.95\columnwidth]{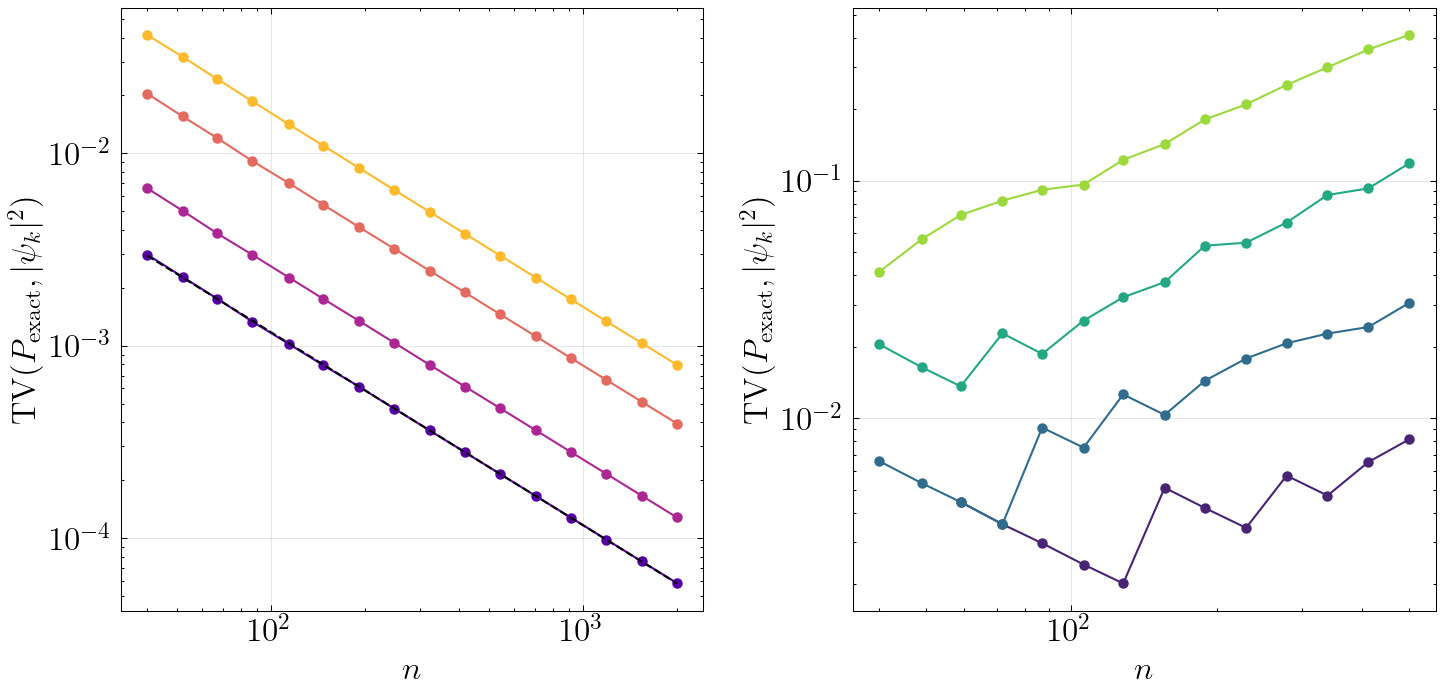}
    \caption{(a) At fixed excitation level $k$ and increasing $n$, the total-variation distance between the exact distribution [Eq.~\eqref{P in terms of Krawtchouk}] and the continuum Hermite-function prediction of Sec.~\ref{sec:krawtchouk} decays as a clean power law; a log-log fit gives exponent $\approx-1.00$, consistent with the $\mathcal{O}(1/n)$ error terms of Eq.~\eqref{TAYLOR EQ}. (b) At fixed ratio $k/n$ and increasing $n$, the same distance instead grows: the continuum limit requires $k$ fixed and $n\to\infty$ ($k/n\to0$), not merely $n$ large at fixed ratio.}
    \label{fig:scaling}
\end{figure}

The $n\to\infty$ contraction of the finite oscillator to the ordinary QHO, and its convergence rate, are also already established: the contraction is worked out directly in Ref.~\cite{Atakishiyev2003contraction} and sharpened with an explicit second-order rate in Ref.~\cite{chauleur2024kravchuk}. Figure~\ref{fig:scaling} is a numerical check that our derivation reproduces this, not a new result about the finite oscillator itself. Panel (a) fixes $k\in\{0,1,2,3\}$ and sweeps $n$ from $40$ to $2000$: the total-variation distance to the continuum Hermite-function prediction decays as a clean power law with fitted exponent $\approx -1.00$, matching the $\mathcal{O}(1/n)$ error estimate below Eq.~\eqref{diff eq}. Panel (b) fixes the ratio $k/n\in\{0.01,0.02,0.04,0.08\}$ instead and sweeps $n$: the deviation grows with $n$, because the excitation level $k=({\rm ratio})\times n$ grows without bound and the $k/n\ll1$ hypothesis behind the Taylor expansion of Eq.~\eqref{TAYLOR EQ} is violated by construction. The continuum limit is a two-parameter statement---an expansion in $k/n$ at fixed $k$, not simply an $n\to\infty$ limit---consistent with the standard asymptotic regimes of classical discrete orthogonal polynomials~\cite{Nikiforov_1991:inb}.

\begin{figure}[t]
    \centering
    \includegraphics[width=0.75\columnwidth]{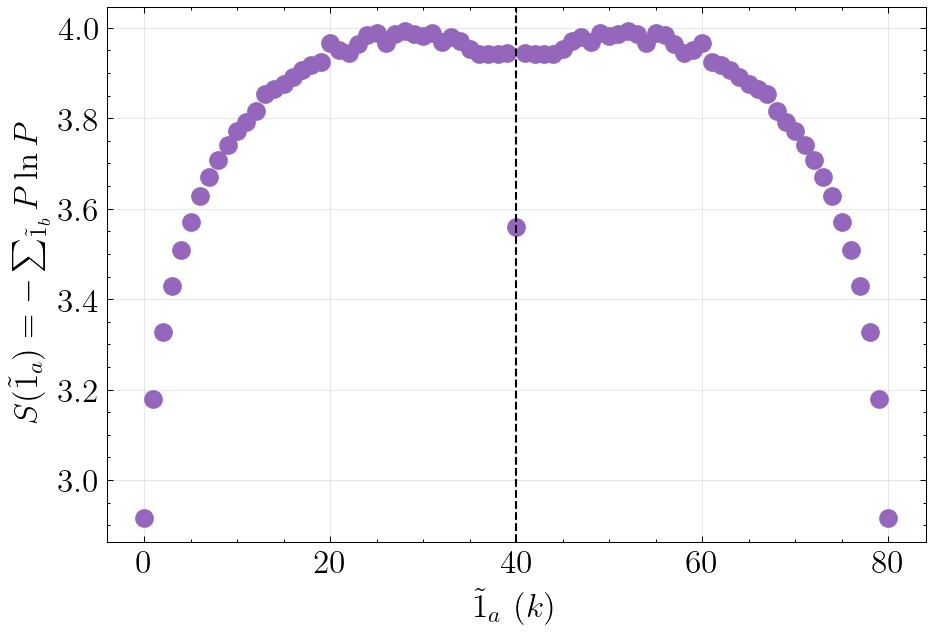}
    \caption{Shannon entropy of the transition measure $P(\tilde{1}_b\mid n,\tilde{1}_a)$ as a function of excitation level, for $n=80$. The sharp entropy dip at the particle-hole self-dual point $\tilde{1}_a=n/2$ has an exact explanation: the reflection identity $K_{n-k}(x;n)=(-1)^xK_k(x;n)$ forces $K_{n/2}(x;n)=0$ at every odd $x$, halving the support of $P$ at the self-dual level.}
    \label{fig:entropy}
\end{figure}

Figure~\ref{fig:entropy} shows the Shannon entropy $S(\tilde{1}_a) = -\sum_{\tilde{1}_b}P\ln P$ of the transition measure as a function of excitation level for $n=80$. The entropy rises from the ground state toward the bulk of the spectrum, as expected, but dips sharply at the particle-hole self-dual point $\tilde{1}_a=n/2$. This feature has an exact and elementary explanation, which we record because it appears not to have been stated in this form: the second reflection identity in Eq.~\eqref{eq:reflections}, $K_{n-k}(x;n)=(-1)^x K_k(x;n)$, evaluated at the self-dual level $k=n/2$ (for even $n$), gives $K_{n/2}(x;n) = (-1)^x K_{n/2}(x;n)$ and hence $K_{n/2}(x;n)=0$ for every odd $x$. The self-dual transition measure is therefore supported only on even output counts: exactly half of the outcomes are extinguished by a parity selection rule, and the entropy correspondingly drops by an amount of order $\ln 2$ relative to a smooth interpolation of the neighboring levels. (The $n=4$ table in Appendix~\ref{app:n4} displays this in miniature: the row $\tilde 1_a=2$ reads $\{3/8,\,0,\,1/4,\,0,\,3/8\}$.) The same dip independently corroborates a structurally analogous result in the companion exterior-algebra treatment~\cite{skrgic2026emergent}, which derives an entropy minimum at the Hodge self-dual grade of $\Lambda^\ast(\mathbb{R}^n)$ from the action of the Hodge star---a construction with no combinatorial reference to Hamming distances at all. Two independent constructions landing on an entropy minimum at the same self-dual point, both traceable to a self-duality selection rule, is evidence that the effect is a robust consequence of the underlying particle-hole-symmetric structure rather than an artifact of either framing.

The $\mathcal{O}(1/n)$ correction and particle-hole symmetry derived here are properties of the Krawtchouk finite oscillator, and systems with a bounded, symmetric-subspace structure---an $n$-qubit Dicke manifold~\cite{Dicke1954CoherenceIS}, or a bosonic mode truncated at $n$ excitations---are exactly where this model already lives: the $\mathfrak{su}(2)$ finite-oscillator literature discussed above is built on precisely this identification. What our derivation adds is a reason such a system should show Krawtchouk statistics \emph{specifically}, as a matter of information-theoretic necessity given a symbol-count restriction, a capacity bound, and the bilinear ansatz, rather than as one convenient finite-dimensional model among others.

This suggests a concrete experimental outlook. The QHO is realized to high precision in many controlled laboratory systems---the motional modes of trapped ions~\cite{RevModPhys.75.281}, microwave resonators in circuit QED~\cite{RevModPhys.93.025005}, phonon-number-resolving mechanical modes~\cite{Ding:2015xyg}, and collective-spin (Dicke) manifolds of $n$ qubits---and every such system has finite information capacity. The finite-$n$ deviations derived here are therefore, in principle, testable, and the predictions are parameter-free once $n$ is identified: the eigenstate quadrature statistics should follow the Krawtchouk form of Eq.~\eqref{P in terms of Krawtchouk} rather than the Hermite form, with total-variation deviations decaying as $c_k/n$ at fixed excitation $k$ [Fig.~\ref{fig:scaling}(a)] and coefficients $c_k$ that grow with $k$, so that moderately excited states at moderate $n$ maximize the signal; the spectrum should terminate after exactly $n+1$ equally spaced levels with exact particle-hole mirror symmetry; and at the self-dual excitation $k=n/2$ the outcome distribution should exhibit the parity selection rule of Fig.~\ref{fig:entropy}, with every odd output extinguished. In collective-spin realizations $n$ is directly the number of constituents and is known without fitting, which makes the $1/n$ scaling a sharp, falsifiable target. Three of these signatures admit exact closed forms at every $n$ and $k$, displayed in Fig.~\ref{fig:signatures}. The ground state is the binomial distribution, so its excess kurtosis is $\kappa_0=-2/n$ exactly. For the $k$-th level, squaring the exact recurrence Eq.~\eqref{eq:exact-recurrence} and using orthonormality gives $\langle(n-2x)^2\rangle_k=(k+1)(n-k)+k(n-k+1)=n(2k+1)-2k^2$, hence
\begin{align}
\label{eq:signatures}
\kappa_0=-\frac{2}{n},
\qquad
1-\frac{\langle u^2\rangle_k}{k+\tfrac12}=\frac{2k^2}{(2k+1)\,n},
\end{align}
with $\langle u^2\rangle_k = k+\tfrac12-k^2/n$ the exact position variance of level $k$. Finally, the support is hard-bounded: $P$ vanishes identically for $|u|>\sqrt{n/2}$, a cutoff no continuum eigenstate possesses. Each is a concrete target---a few-percent measurement of the ground-state kurtosis, or of the level-resolved variance deficit, determines the operative $n$ directly, giving quantitative content to the capacity measurement proposed below.

\begin{figure*}[t]
    \centering
    \includegraphics[width=0.98\textwidth]{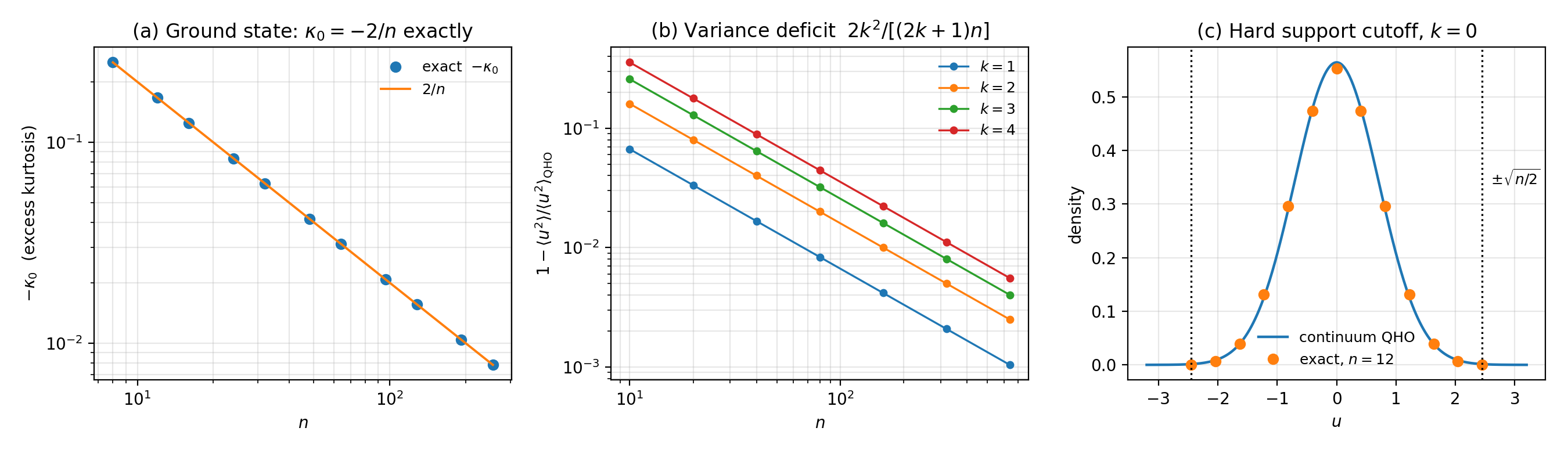}
    \caption{Exact closed-form finite-$n$ signatures [Eq.~\eqref{eq:signatures}] available to experiment. (a) Excess kurtosis of the ground-state position distribution: the exact value (dots) lies on $\kappa_0=-2/n$ (line) at every $n$. (b) Relative variance deficit of level $k$, $1-\langle u^2\rangle_k/(k+\tfrac12)=2k^2/[(2k+1)n]$, exact at every point shown. (c) The hard support cutoff: the exact $n=12$ ground state (dots) against the continuum QHO density (line); the distribution vanishes identically beyond $|u|=\sqrt{n/2}$ (dotted).}
    \label{fig:signatures}
\end{figure*} Two caveats delimit the claim. First, in the $\mathfrak{su}(2)$ realizations these deviations coincide with those of the standard finite-oscillator model~\cite{Atakishiyev2005,Atakishiyev2003contraction}, so such experiments test the Krawtchouk form itself rather than uniquely our derivation of it; the distinctive content of the present framework is the universality claim---that \emph{any} system whose information capacity is genuinely bounded by $2^n$ should exhibit Krawtchouk, rather than some other, finite-size deviations. Second, for systems whose capacity is only effectively truncated (a weakly anharmonic mode, a finite trap depth), identifying the operative $n$ is itself part of the experimental question; an observed family of deviations of the Krawtchouk form, scaling as $1/n$, would then function as a \emph{measurement} of that capacity.

\section{Discussion}
\label{sec:discussion}

We have shown that the Hamming/Delsarte association scheme, forced on the counting problem of binary sequences by an observability restriction to symbol counts, admits---once an exchange-symmetric bilinear transition measure is adopted---exactly one weighting of the hidden overlap consistent with symbol-relabeling symmetry and the $2^n$ capacity bound, across every transition sector at once. That weighting is the alternating sign, every alternative strictly overshoots the capacity bound by the computable amount of Eq.~\eqref{eq:overshoot-formula}, and the resulting measure is exactly the Krawtchouk finite oscillator, converging to the quantum harmonic oscillator in the large-$n$, fixed-excitation continuum limit. Interference, quadratic probabilities, and a discrete equally spaced spectrum are, in this precise and delimited sense, consequences of an information-capacity constraint on a combinatorial scheme together with an exchange-symmetric measure, rather than independent postulates layered on top of a Hilbert-space assumption. Because the derivation is exact at every finite $n$, it comes with finite-size predictions (Sec.~\ref{sec:numerics}) that are in principle testable wherever the oscillator is realized in a system of bounded information capacity.

The derivation also localizes the \emph{origin} of interference in a way the standard formalism does not. In ordinary quantum mechanics, the oscillating signs of the position-space eigenfunctions $\langle x|k\rangle$ are outputs of the formalism with no further explanation inside it. Here the same oscillation is traced to a single cause: the sign is carried by $d_{ab}$ (equivalently $t$), a hidden quantum number of the transition as a whole---defined jointly by the initial and the final state, invisible from either alone---and the alternating weighting over it is the unique bookkeeping consistent with symbol-relabeling symmetry and the capacity bound. Interference, in this model, is what counting looks like when an observable process must be marginalized over a relational variable that no measurement of a single endpoint can access; the forward and backward ensembles of Eq.~\eqref{eq:bilinear-ansatz}, which use both endpoints symmetrically, are the structural expression of this two-endpoint character---and, as argued in Sec.~\ref{sec:rigidity}, of the requirement that neither the preparation nor the outcome be specified beyond its observable counts. To the extent that the construction reproduces the QHO exactly, it suggests a reading of quantum interference more generally as constrained bookkeeping over hidden variables attached to \emph{pairs} (preparation, outcome) rather than to states---while remaining, as emphasized above, entirely distinct from Bell-type nonlocality between separated systems~\cite{bell1964epr}.

In every construction of the finite oscillator we are aware of, the logical order is to postulate a kinematical Lie algebra---$\mathfrak{su}(2)$~\cite{Atakishiyev2005,Atakishiyev_2001:JPA}, a $\mathcal{CP}$-deformed variant of $\mathfrak{su}(2)$~\cite{jafarov2015cpkrawtchouk}, or an $\mathfrak{so}(d+1)$ representation acting on multivariate Krawtchouk polynomials~\cite{genest2013multivariate,genest2015charlier}---and then identify the Krawtchouk polynomials as eigenfunctions of a resulting finite-difference Hamiltonian. Recent work continues this direction: Ref.~\cite{mayqin2024algebraic} builds an algebraic discrete QHO with an $\mathfrak{su}(2)$ kinematical algebra and dynamic resolution-scaling ladder operators. The present work inverts that logical order: no Lie algebra, representation, or Hilbert space is postulated anywhere in Sec.~\ref{sec:scheme}. The rigidity theorem shows instead that the $\mathfrak{su}(2)$-like ladder structure and its Krawtchouk eigenfunctions are the unique output of an information-capacity constraint on a combinatorial scheme, given the bilinear ansatz of Sec.~\ref{sec:rigidity}. What Sec.~\ref{sec:numerics} adds is not new facts about the finite oscillator, which is already well understood, but a check that the object our rigidity theorem forces us into is the same finite oscillator, not a different one that merely looks similar.

The sign structure derived here also has a geometric reading, developed in full in a companion manuscript~\cite{skrgic2026emergent}: promoting the bit structure to Grassmann generators $\theta_i$ with $\theta_i^2=0$ maps binary sequences to monomials in the exterior algebra $\Lambda^\ast(\mathbb{R}^n)$, whose dimension $2^n$ matches the sequence space exactly, and the forced weighting is precisely the Koszul sign~\cite{berger2017koszulsignmap} of the canonical inner product on $\Lambda^\ast(\mathbb{R}^n)$. That manuscript develops the connection independently, deriving an emergent $\mathfrak{su}(2)$ operator algebra, a Hodge-duality particle-hole symmetry, and a fermionic thermodynamic sector directly from the exterior-algebra structure and its own grade-observability restriction, with the QHO recovered as its own contraction limit. We keep the two manuscripts separate because they address different scopes---a transition-probability rigidity theorem here; an operator algebra and thermodynamics there---but the cross-check of Fig.~\ref{fig:entropy} shows the two constructions are independent enough to provide a genuine consistency check on each other, not merely two expositions of the same fact.

A natural question raised by the present work is the extension of the association-scheme argument beyond the binary alphabet, where preliminary indications~\cite{skrgic2026emergent} suggest a ternary generalization contracting to $\mathfrak{su}(3)$ with a $\mathbb{Z}_3$-graded structure. This ternary extension is an instance of a developing program in which finite alphabets and their relabeling groups generate the full Standard Model gauge group directly, with particles realized as words spelled from letters in these alphabets. The mechanism responsible for the sign rigidity derived here reappears as the $\mathbb{Z}_3$ center of the ternary register, which permits only triality-zero words and thereby enforces quark confinement as a selection rule rather than a postulate.

Finally, establishing that the bilinear ansatz itself is forced, rather than assumed, would require showing that every non-negative, normalized function built from the $|S^2|$ cardinalities factors through the Bose--Mesner algebra of the scheme. This is a natural but separate question, which we leave open.

\begin{acknowledgments}
D.S. and D.\v{S}. are partially supported by the US National Science Foundation, under Grant No. PHY-2310363. Claude and Copilot were used in creation of this manuscript, mainly to test various possible realizations of the framework, generate the concrete examples, numerical checks and improve the text.
\end{acknowledgments}

%% ------------------------------------------------------------------
\appendix

\section{Proof of the rigidity theorem}
\label{app:proof}

Throughout this appendix we abbreviate $a\equiv\tilde 1_a$ and $b\equiv\tilde 1_b$, use the convention $\binom{m}{j}=0$ outside $0\le j\le m$ so that all sums over $t$ run over $\mathbb{Z}$, and write
\begin{align}
\begin{split}
\Upsilon_a(a,b) &= \sum_{t} w(t)\binom{a}{t}\binom{n-a}{b-t},\\
\Upsilon_b(a,b) &= \sum_{t} w(t)\binom{b}{t}\binom{n-b}{a-t},
\end{split}
\end{align}
suppressing the fixed $n$. Note for later use that both sums involve $w(t)$ only for $0\le t\le\min(a,b)\le a$: the sector $a$ constrains only the values $w(0),\ldots,w(a)$.

We will use two elementary lemmas.

\begin{lemma}[Binomial transform of a Krawtchouk polynomial]
\label{lem:gf}
For any $0\le m\le n$ and $0\le k\le n$,
\begin{align}
\sum_{x=0}^{m}\binom{m}{x}K_k(x;n) = 2^m\binom{n-m}{k}.
\end{align}
In particular, for $m=n-a$ and $k=a$, $\sum_{x}\binom{n-a}{x}K_a(x;n) = 2^{n-a}$.
\end{lemma}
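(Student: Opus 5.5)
The most direct route is through the generating function already quoted in the paper, $\sum_k K_k(x;n)z^k=(1-z)^x(1+z)^{n-x}$. Multiply by $\binom{m}{x}$ and sum over $x$ from $0$ to $m$ (for $x\le m$ the $x$-dependence is exactly captured by this generating function; the factor $\binom{m}{x}$ vanishes for $x>m$):
\begin{align}
\sum_{k}z^k\sum_{x=0}^{m}\binom{m}{x}K_k(x;n)
={}&(1+z)^{n-m}\sum_{x=0}^{m}\binom{m}{x}(1-z)^x(1+z)^{m-x}\nonumber\\
={}&(1+z)^{n-m}\,\big[(1-z)+(1+z)\big]^m\nonumber\\
={}&2^m(1+z)^{n-m}.
\end{align}
This uses the binomial theorem for the inner sum after writing $(1+z)^{n-x}=(1+z)^{n-m}(1+z)^{m-x}$. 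Extracting the coefficient of $z^k$ on both sides gives $\sum_{x}\binom{m}{x}K_k(x;n)=2^m\binom{n-m}{k}$, valid for all $0\le k\le n$. For $k>n-m$ both sides vanish, which the convention $\binom{m}{j}=0$ handles automatically.

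For the specialization, set $m=n-a$ and $k=a$. The right-hand side becomes $2^{n-a}\binom{a}{a}=2^{n-a}$, as claimed.

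The only step requiring care is justifying the generating function itself, which the paper asserts. If needed, I would verify it by expanding $(1-z)^x=\sum_j(-1)^j\binom{x}{j}z^j$ and $(1+z)^{n-x}=\sum_i\binom{n-x}{i}z^i$. The coefficient of $z^k$ in their product is exactly $\sum_j(-1)^j\binom{x}{j}\binom{n-x}{k-j}$, which is Eq.~\eqref{krawtchouk polynomial definition}. Since this holds for every integer $0\le x\le n$, no real obstacle remains.

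An alternative check avoids generating functions entirely. Insert the definition, swap the order of summation, and use the identities $\binom{m}{x}\binom{x}{j}=\binom{m}{j}\binom{m-j}{x-j}$ and Vandermonde to collapse the sum over $x$. This is messier, so I would present only the generating-function argument.
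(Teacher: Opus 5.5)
Your proof is correct and is essentially the paper's argument: both multiply the generating function $(1-z)^x(1+z)^{n-x}$ by $\binom{m}{x}$, collapse the $x$-sum with the binomial theorem, and read off $2^m\,[z^k]\,(1+z)^{n-m}$. The only difference is cosmetic: the paper factors out $(1+z)^n$ and sums powers of $(1-z)/(1+z)$, whereas you factor out $(1+z)^{n-m}$ first, which keeps every expression polynomial, and your explicit verification of the generating function from the definition is a useful addition.
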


\begin{proof}
From the generating function $\sum_k K_k(x;n)z^k = (1-z)^x(1+z)^{n-x}$,
\begin{align}
\sum_{x}\binom{m}{x}K_k(x;n)
&= [z^k]\,(1+z)^n\sum_x\binom{m}{x}\Big(\frac{1-z}{1+z}\Big)^{x} \nonumber\\
&= [z^k]\,(1+z)^n\Big(1+\frac{1-z}{1+z}\Big)^{m} \nonumber\\
&= 2^m\,[z^k]\,(1+z)^{n-m},
\end{align}
which is $2^m\binom{n-m}{k}$. For $m=n-a$, $k=a$: $2^{n-a}\binom{a}{a}=2^{n-a}$.
\end{proof}

\begin{lemma}[Three sector sums]
\label{lem:sums}
For $1\le a\le n-1$, define
\begin{align}
\begin{split}
S_1 &\equiv \sum_{b} K_b(a;n)\binom{b}{a},\\
S_2 &\equiv \sum_{b} K_a(b;n)\binom{n-a}{b-a},\\
S_3 &\equiv \sum_{b} \binom{n-a}{b-a}\binom{b}{a}.
\end{split}
\end{align}
Then
\begin{align}
\begin{split}
S_1 &= S_2 = (-1)^a\,2^{\,n-a},\\
S_3 &= \sum_{j\ge0}\binom{a}{j}\binom{n-a}{j}\,2^{\,n-a-j} \;>\; 2^{\,n-a}.
\end{split}
\end{align}
\end{lemma}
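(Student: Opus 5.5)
The plan is to reduce all three sums to generating-function coefficient extractions or to Vandermonde-type convolutions, using only the generating function $\sum_k K_k(x;n)z^k=(1-z)^x(1+z)^{n-x}$ and the duality relation Eq.~\eqref{eq:duality}. I will do $S_1$ first, obtain $S_2$ from it by duality, and treat $S_3$ separately, since it contains no Krawtchouk polynomial.

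For $S_1$, fix $x=a$ and regard $f(z)\equiv\sum_b K_b(a;n)z^b=(1-z)^a(1+z)^{n-a}$ as a polynomial in $z$. For any polynomial $f(z)=\sum_b c_b z^b$ one has $\sum_b c_b\binom{b}{a}=\frac{1}{a!}f^{(a)}(1)=[y^a]\,f(1+y)$, because $(1+y)^b=\sum_a\binom{b}{a}y^a$. Substituting $z=1+y$ gives
\begin{align}
f(1+y)=(-y)^a(2+y)^{n-a}.
\end{align}
Its $y^a$ coefficient is $(-1)^a$ times the constant term of $(2+y)^{n-a}$, so $S_1=(-1)^a2^{n-a}$. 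This coefficient-extraction step is the only one that is not mechanical, and it is where I would check the argument most carefully. The check is that the factor $(1-z)^a$ turns into exactly $(-y)^a$, so that all lower-order terms vanish. This vanishing is the combinatorial shadow of the alternating sign.

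For $S_2$, I will use the elementary identity $\binom{n}{b}\binom{b}{a}=\binom{n}{a}\binom{n-a}{b-a}$ to write $\binom{n-a}{b-a}=\binom{n}{b}\binom{b}{a}/\binom{n}{a}$. Then
\begin{align}
S_2=\sum_b \frac{\binom{n}{b}K_a(b;n)}{\binom{n}{a}}\binom{b}{a}=\sum_b K_b(a;n)\binom{b}{a}=S_1,
\end{align}
where the middle equality is exactly Eq.~\eqref{eq:duality} with the roles of $k$ and $x$ assigned so that $\binom{n}{b}K_a(b;n)=\binom{n}{a}K_b(a;n)$. (Lemma~\ref{lem:gf} offers an alternative route after the shift $b=a+c$, but the duality route is shorter.)

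For $S_3$, I will shift $b=a+c$, so that $S_3=\sum_c\binom{n-a}{c}\binom{a+c}{a}$. Applying the Vandermonde identity $\binom{a+c}{a}=\sum_j\binom{a}{j}\binom{c}{j}$ and exchanging the order of summation gives
\begin{align}
S_3=\sum_j\binom{a}{j}\sum_c\binom{n-a}{c}\binom{c}{j}.
\end{align}
The inner sum equals $\binom{n-a}{j}2^{n-a-j}$: choose a $j$-subset of the $n-a$ elements, then any superset of it. This yields the stated closed form.

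Strictness then follows because every term is non-negative. The $j=0$ term is exactly $2^{n-a}$. The $j=1$ term is $a(n-a)2^{n-a-1}$, which is strictly positive precisely when $1\le a\le n-1$. Hence $S_3>2^{n-a}$ on that range.
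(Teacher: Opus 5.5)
Your proof is correct. The $S_3$ computation, and the duality-plus-trinomial-revision step linking $S_1$ and $S_2$, are exactly the paper's.

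The real difference is in how the Krawtchouk sum is evaluated. You compute $S_1$ directly: you read it as the $a$-th binomial moment of the degree generating function $\sum_b K_b(a;n)z^b=(1-z)^a(1+z)^{n-a}$ and extract it via $z=1+y$. The paper instead computes $S_2$. It uses the reflection identity $K_a(b;n)=(-1)^aK_a(n-b;n)$ and the substitution $x=n-b$ to reduce $S_2$ to the binomial transform of Lemma~\ref{lem:gf}, which is a generating-function computation in the argument $x$ rather than in the degree. It then transfers the result to $S_1$ by duality.

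Your route is more self-contained: for this lemma it needs neither the reflection identity nor Lemma~\ref{lem:gf}. It also shows clearly why a single term survives, since $(1-z)^a$ becomes $(-y)^a$ and every binomial moment of order below $a$ vanishes. The paper's route, in exchange, ties the sign $(-1)^a$ to the particle-hole reflection it uses again for the relabeling invariance and the self-dual selection rule.

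One small correction to your parenthetical. Lemma~\ref{lem:gf} as stated requires the argument of $K_a$ to equal the lower index of the binomial. That is arranged by $x=n-b$ together with the reflection identity, not by the shift $b=a+c$. The shift would need a shifted-argument variant, $\sum_c\binom{n-a}{c}K_a(a+c;n)=[z^a]\,(1-z)^a(1+z)^{n-a}\bigl(\tfrac{2}{1+z}\bigr)^{n-a}=(-1)^a2^{n-a}$, which also works.
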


\begin{proof}
\emph{$S_1=S_2$:} by the duality relation Eq.~\eqref{eq:duality}, $K_b(a;n)=K_a(b;n)\binom{n}{b}/\binom{n}{a}$, and by the trinomial revision $\binom{n}{b}\binom{b}{a}=\binom{n}{a}\binom{n-a}{b-a}$; substituting both into $S_1$ gives $S_2$.

\emph{$S_2=(-1)^a2^{n-a}$:} the first reflection identity in Eq.~\eqref{eq:reflections} gives $K_a(b;n)=(-1)^aK_a(n-b;n)$; substituting $x=n-b$ and using $\binom{n-a}{b-a}=\binom{n-a}{(n-a)-x}=\binom{n-a}{x}$,
\begin{align}
S_2 = (-1)^a\sum_{x}\binom{n-a}{x}K_a(x;n) = (-1)^a\,2^{\,n-a}
\end{align}
by Lemma~\ref{lem:gf}.

\emph{$S_3$:} write $b=a+c$ and use the Vandermonde form $\binom{a+c}{a}=\sum_j\binom{a}{j}\binom{c}{c-j}=\sum_j\binom{a}{j}\binom{c}{j}$, then $\sum_c\binom{n-a}{c}\binom{c}{j}=\binom{n-a}{j}2^{\,n-a-j}$ (choose the $j$ marked elements first, the rest freely). The $j=0$ term alone is $2^{n-a}$, and the $j=1$ term, $a(n-a)2^{\,n-a-1}$, is strictly positive for $1\le a\le n-1$; hence $S_3>2^{n-a}$.
\end{proof}

\emph{Proof of Theorem~\ref{thm:rigidity}.} The overall sign of $w$ flips $\Upsilon_a$ and $\Upsilon_b$ simultaneously and cancels in $P$; we may therefore fix $w(0)>0$ once Step~1 shows $w(0)\neq0$, and reinstate the overall sign $\sigma$ at the end.

\emph{Step 1: $w(0)^2=1$.} In the sector $a=0$, only the channel $t=0$ is admissible, so $\Upsilon_a(0,b)=w(0)\binom{n}{b}$ and $\Upsilon_b(0,b)=w(0)$. Hypothesis (N) gives
\begin{align}
\sum_b P(b\mid n,0) = \frac{w(0)^2}{2^n}\sum_b\binom{n}{b} = w(0)^2 = 1 .
\end{align}
Set $w(0)=+1$.

\emph{Step 2: $w(t)^2=1$ for every $t$.} In the sector $a=n$, only the channel $t=b$ is admissible [$\binom{n}{t}\binom{0}{b-t}$ forces $t=b$, and $\binom{b}{t}\binom{n-b}{n-t}$ forces the same], so
\begin{align}
\Upsilon_a(n,b)=w(b)\binom{n}{b},\qquad \Upsilon_b(n,b)=w(b),
\end{align}
and hence $P(b\mid n,n)=w(b)^2\binom{n}{b}/2^n$. Hypothesis (C) applied to the sector pair $(0,b)\leftrightarrow(n,n-b)$ requires
\begin{align}
P(n-b\mid n,n) = \frac{w(n-b)^2\binom{n}{n-b}}{2^n} \overset{!}{=} P(b\mid n,0)=\frac{\binom{n}{b}}{2^n},
\end{align}
i.e. $w(n-b)^2=1$ for every $b$, hence $w(t)\in\{\pm1\}$ for every $t\in\{0,\ldots,n\}$.

\emph{Step 3: induction on the sector.} Suppose $w(t)=(-1)^t$ for all $t<a$, for some $1\le a\le n-1$; Step 1 provides the base case. Since sector $a$ involves only $w(0),\ldots,w(a)$, and $w(a)=\pm(-1)^a$ by Step 2, write $w(a)=(-1)^a+\delta$ with $\delta\in\{0,\;-2(-1)^a\}$. Then, isolating the $t=a$ terms,
\begin{align}
\Upsilon_a(a,b) &= K_b(a;n) + \delta\binom{n-a}{b-a}, \nonumber\\
\Upsilon_b(a,b) &= K_a(b;n) + \delta\binom{b}{a},
\end{align}
using $\binom{a}{a}=\binom{n-b}{a-a}=1$ [wherever these deviate from zero the constraint $b\ge a$ is automatic through the binomials]. Hypothesis (N) in sector $a$, together with $\sum_bK_b(a;n)K_a(b;n)=2^n$ [orthogonality, Eq.~\eqref{orthogonality}, after duality], gives
\begin{align}
\sum_b\Upsilon_a\Upsilon_b
&= 2^n + \delta\,(S_1+S_2) + \delta^2 S_3 \nonumber\\
&= 2^n + 2\delta(-1)^a2^{\,n-a} + \delta^2 S_3
\end{align}
by Lemma~\ref{lem:sums}. For the correct sign, $\delta=0$, the sum is $2^n$ and (N) holds. For the wrong sign, $\delta=-2(-1)^a$, the sum is
\begin{align}
2^n - 4\cdot2^{\,n-a} + 4S_3 = 2^n + 4\big(S_3-2^{\,n-a}\big) > 2^n,
\end{align}
strictly, by Lemma~\ref{lem:sums}. The wrong sign therefore violates (N)---and does so by \emph{overshooting} the capacity bound, the same failure mode as the fully unsigned measure of Eq.~\eqref{eq:overshoot}. Hence $w(a)=(-1)^a$, completing the induction through $a=n-1$.

\emph{Step 4: the endpoint $t=n$.} The value $w(n)$ enters $\Upsilon_a$ or $\Upsilon_b$ only when $t=n$ is admissible, which requires $a=b=n$; there $\Upsilon_a(n,n)=\Upsilon_b(n,n)=w(n)$ and $P(n\mid n,n)=w(n)^2/2^n$. Thus $w(n)$ enters $P$ only quadratically, its sign is unobservable, and Step 2 fixes $w(n)^2=1$; both hypotheses are satisfied by either sign, consistent with the statement $w(n)=\pm(-1)^n$. [One checks that (N) holds in sector $n$ automatically: $\sum_bP(b\mid n,n)=\sum_bw(b)^2\binom{n}{b}/2^n=1$ by Step 2.]

\emph{Existence.} It remains to verify that the forced weighting satisfies (C) in full, not only on the slice used in Step 2. With $w(t)=(-1)^t$, Sec.~\ref{sec:krawtchouk} gives $P=\binom{n}{b}\binom{n}{a}^{-1}K_a(b;n)^2/2^n$; the reflection identities Eq.~\eqref{eq:reflections} give
\begin{align}
K_{n-a}(n-b;n) &= (-1)^{n-a}\,K_{n-a}(b;n) \nonumber\\
&= (-1)^{n-a}(-1)^{b}\,K_a(b;n),
\end{align}
which squares to $K_a(b;n)^2$, and $\binom{n}{n-b}\big/\binom{n}{n-a}=\binom{n}{b}\big/\binom{n}{a}$; hence $P(n-b\mid n,n-a)=P(b\mid n,a)$, i.e. Eq.~\eqref{eq:relabeling} holds identically. Non-negativity is manifest from the squared form. \hfill$\blacksquare$

\begin{remark}[Normalization alone is not rigid]
\label{rem:spurious}
Hypothesis (C) cannot be dropped. Impose (N) alone in the sectors $a=0$ and $a=1$, with $w(0)=1$ and $w(1)=w_1$ free. Sector $a=1$ has $\Upsilon_a(1,b)=\binom{n-1}{b}+w_1\binom{n-1}{b-1}$ and $\Upsilon_b(1,b)=(n-b)+w_1 b$, and evaluating $\sum_b\Upsilon_a\Upsilon_b=2^n$ term by term yields the quadratic
\begin{align}
(n+1)\,w_1^2 + 2(n-1)\,w_1 + (n-3) = 0,
\end{align}
whose roots are $w_1=-1$ and $w_1=-(n-3)/(n+1)$. Normalization at the first excited sector therefore admits a spurious branch with $|w_1|\neq1$; it is the relabeling symmetry, through Step 2, that eliminates it and every branch like it.
\end{remark}

\section{Worked example: $n=4$}
\label{app:n4}

\begin{figure*}
\centering
\subfloat[]{\includegraphics[width=0.3\linewidth]{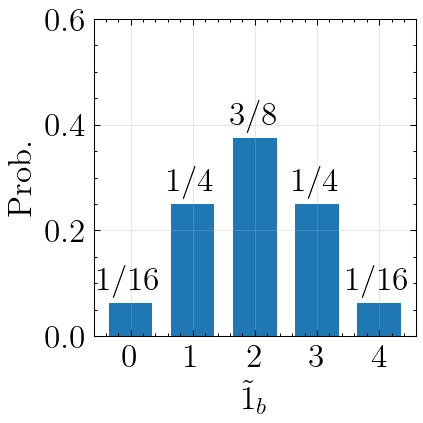}}
\subfloat[]{\includegraphics[width=0.3\linewidth]{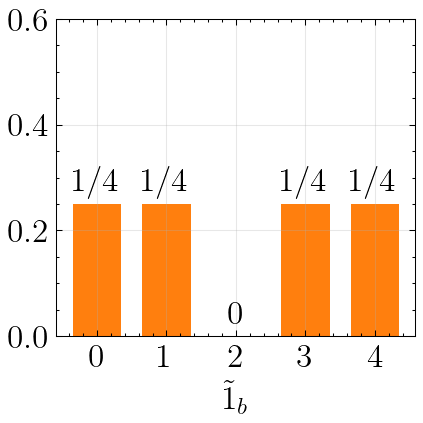}}
\subfloat[]{\includegraphics[width=0.3\linewidth]{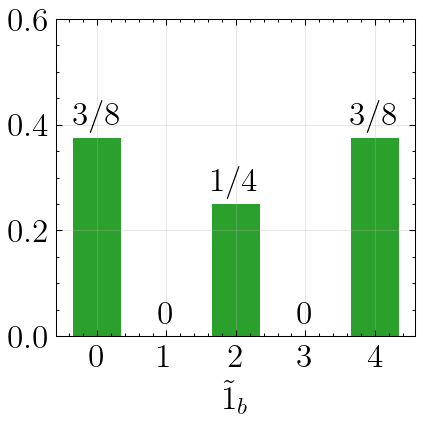}}

\vspace{0.3cm}

\subfloat[]{\includegraphics[width=0.3\linewidth]{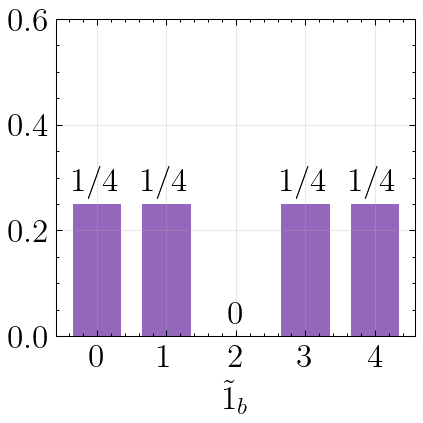}}
\subfloat[]{\includegraphics[width=0.3\linewidth]{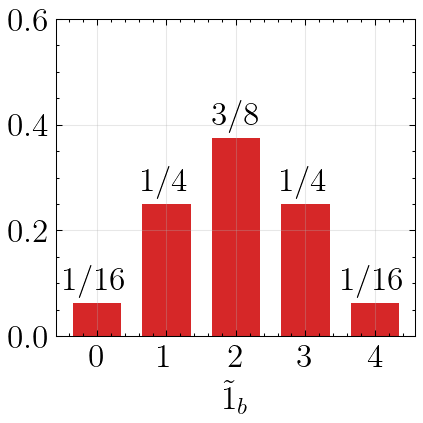}}

\caption{$P(\tilde{1}_b\mid n=4,\tilde{1}_a)$ for $\tilde{1}_a=0$ (a), $1$ (b), $2$ (c), $3$ (d) and $4$ (e). Exact rational values label each bar; zeros are annotated. The node count equals $\tilde{1}_a$ in every panel, and the particle-hole symmetry is visible by comparing $\tilde{1}_a=0\leftrightarrow4$ and $\tilde{1}_a=1\leftrightarrow3$.}
\label{fig:n4_bars}
\end{figure*}

We work through every computational step for $n=4$, $\tilde{1}_a=1$ (first excited level), then display the full probability table for all five levels $\tilde{1}_a=0,\ldots,4$. All numbers come from integer counting alone; the Krawtchouk formula~\eqref{P in terms of Krawtchouk} is checked against them at the end.

Fixing $n=4$, $\tilde{1}_a=1$: Alice's sequences are all length-4 strings with exactly one $1$. Equation~\eqref{s1} gives
\begin{align}
\begin{split}
  S^1(4,1) &= \{0001,0010,0100,1000\},\\
  |S^1(4,1)| &= \binom{4}{1}=4.
\end{split}
\end{align}

For each output count $\tilde{1}_b\in\{0,1,2,3,4\}$, the overlap $t=f_{11}$ runs over the admissible set $\Lambda$ of Eq.~\eqref{eq:Lambda}, obtained by requiring all four frequencies to be non-negative integers; equivalently, the Hamming distance $d_{ab}=\tilde 1_a+\tilde 1_b-2t$ runs over the mirror set. Both are listed here:
\begin{center}
\begin{tabular}{c|c|c}
$\tilde{1}_b$ & $\Lambda$ ($t$ values) & $d_{ab}$ values \\\hline
0 & $\{0\}$   & $\{1\}$ \\
1 & $\{0,1\}$ & $\{2,0\}$ \\
2 & $\{0,1\}$ & $\{3,1\}$ \\
3 & $\{0,1\}$ & $\{4,2\}$ \\
4 & $\{1\}$   & $\{3\}$ \\
\end{tabular}
\end{center}

Table~\ref{tab:app} records, for every $(\tilde{1}_b,d_{ab})$ pair, the base-4 frequencies, the measurement-set cardinality $|S^2|=4!/(f_{00}!f_{01}!f_{10}!f_{11}!)$, the parity sign $(-1)^{t}=(-1)^{f_{11}}$, and the two elementary cardinalities of Eq.~\eqref{eq:eps closed}.

\begin{table}
\centering
\renewcommand{\arraystretch}{1.25}
\setlength{\tabcolsep}{4pt}
\begin{tabular}{@{}cccccccccc@{}}
\hline\hline
$\tilde{1}_b$ & $d_{ab}$ & $f_{11}$ & $f_{10}$ & $f_{01}$ & $f_{00}$ & $|S^2|$ & $(-1)^{f_{11}}$ & $|\varepsilon_a|$ & $|\varepsilon_b|$\\
\hline
0 & 1 & 0 & 1 & 0 & 3 &  4 & $+1$ & 1 & 4 \\
\hline
1 & 0 & 1 & 0 & 0 & 3 &  4 & $-1$ & 1 & 1 \\
  & 2 & 0 & 1 & 1 & 2 & 12 & $+1$ & 3 & 3 \\
\hline
2 & 1 & 1 & 0 & 1 & 2 & 12 & $-1$ & 3 & 2 \\
  & 3 & 0 & 1 & 2 & 1 & 12 & $+1$ & 3 & 2 \\
\hline
3 & 2 & 1 & 0 & 2 & 1 & 12 & $-1$ & 3 & 3 \\
  & 4 & 0 & 1 & 3 & 0 &  4 & $+1$ & 1 & 1 \\
\hline
4 & 3 & 1 & 0 & 3 & 0 &  4 & $-1$ & 1 & 4 \\
\hline\hline
\end{tabular}
\caption{Base-4 frequencies, cardinalities, and parity signs for $n=4$, $\tilde{1}_a=1$.}
\label{tab:app}
\end{table}

Summing each column of Table~\ref{tab:app} with its parity weight gives the ensembles $\Upsilon_a$ and $\Upsilon_b$, and hence $P(\tilde{1}_b\mid4,1)=\Upsilon_a\Upsilon_b/2^4=\Upsilon_a\Upsilon_b/16$:
\begin{align}
\begin{array}{c|rrrrr}
\tilde{1}_b & 0 & 1 & 2 & 3 & 4\\\hline
\Upsilon_a  & 1 & 2 & 0 & -2 & -1\\
\Upsilon_b  & 4 & 2 & 0 & -2 & -4\\
\Upsilon_a\Upsilon_b & 4 & 4 & 0 & 4 & 4\\
P           & 1/4 & 1/4 & 0 & 1/4 & 1/4
\end{array}
\label{eq:app_prob_table}
\end{align}

Sum $=4\times\tfrac14=1$. The zero at $\tilde{1}_b=2$ is exactly the node from destructive cancellation: the two channels $t\in\{1,0\}$ (i.e. $d_{ab}\in\{1,3\}$) carry equal cardinalities ($|\varepsilon_a|=3$ each) but opposite signs. Note what distinguishes the two cancelling channels: nothing observable. They agree on every endpoint count and differ only in $d_{ab}$, the variable defined by the input and output jointly---the interference node is bookkeeping over precisely this nonlocal, hidden label. Without the parity sign, the unsigned bilinear measure of Eq.~\eqref{eq:naive} would give $P^{\rm cl}(2\mid4,1)=(3+3)(2+2)/16=24/16>1$, directly violating normalization; indeed the full unsigned row sums to $\binom{4}{1}=4$, the $n=4$ instance of the exact overshoot Eq.~\eqref{eq:overshoot} illustrated in Fig.~\ref{fig:capacity}(a).

Repeating for all $\tilde{1}_a$:
\begin{align}
\begin{array}{c|ccccc}
\tilde{1}_a\backslash\tilde{1}_b & 0 & 1 & 2 & 3 & 4\\\hline
0 & 1/16 & 1/4  & 3/8  & 1/4  & 1/16\\
1 & 1/4  & 1/4  & 0    & 1/4  & 1/4 \\
2 & 3/8  & 0    & 1/4  & 0    & 3/8 \\
3 & 1/4  & 1/4  & 0    & 1/4  & 1/4 \\
4 & 1/16 & 1/4  & 3/8  & 1/4  & 1/16
\end{array}
\label{eq:app_full_table}
\end{align}
Each row sums to $1$; by the symmetry of Eq.~\eqref{eq:P symmetric form} the table is symmetric and each column sums to $1$ as well (the transition matrix is doubly stochastic). The number of zeros in row $\tilde{1}_a$ equals $\tilde{1}_a$, matching the node count of the $k=\tilde{1}_a$ QHO eigenstate; particle-hole symmetry $P(\tilde{1}_b\mid n,\tilde{1}_a)=P(n-\tilde{1}_b\mid n,n-\tilde{1}_a)$ is evident (rows $0,4$ and $1,3$ are identical); and the parity selection rule at the self-dual level is visible in row $\tilde 1_a=2$, whose odd entries vanish identically, the $n=4$ instance of the entropy-dip mechanism of Sec.~\ref{sec:numerics}. Finally, marginalizing a \emph{single} unsigned ensemble over the hidden overlap, $\sum_t|\varepsilon_a|/2^n=\binom{4}{\tilde 1_b}/16=\{1,4,6,4,1\}/16$, reproduces the plain binomial distribution for every $\tilde{1}_a$: a valid probability in the abstract, but one that carries no memory of the input state and so cannot describe a physical transition.


\begin{thebibliography}{99}

\bibitem{sakurai2017modernqm}
J.~J. Sakurai and J.~Napolitano, \emph{Modern Quantum Mechanics}, 2nd~ed. (Cambridge University Press, Cambridge, 2017).

\bibitem{walls2008quantum}
D.~F. Walls and G.~J. Milburn, \emph{Quantum Optics} (Springer, Berlin, Heidelberg, 2008).

\bibitem{peskin1995introduction}
M.~E. Peskin and D.~V. Schroeder, \emph{An Introduction to Quantum Field Theory}, Frontiers in Physics (Avalon Publishing, 1995).

\bibitem{Ding:2015xyg}
S.~Ding, G.~Maslennikov, R.~Hablutzel, H.~Loh, and D.~Matsukevich, Cross-Kerr nonlinearity for phonon counting, Phys. Rev. Lett. \textbf{119}, 150404 (2017), arXiv:1512.01670.

\bibitem{RevModPhys.75.281}
D.~Leibfried, R.~Blatt, C.~Monroe, and D.~Wineland, Quantum dynamics of single trapped ions, Rev. Mod. Phys. \textbf{75}, 281 (2003).

\bibitem{RevModPhys.93.025005}
A.~Blais, A.~L. Grimsmo, S.~M. Girvin, and A.~Wallraff, Circuit quantum electrodynamics, Rev. Mod. Phys. \textbf{93}, 025005 (2021).

\bibitem{hardy2001quantumtheoryreasonableaxioms}
L.~Hardy, Quantum theory from five reasonable axioms, arXiv:quant-ph/0101012 (2001).

\bibitem{dariano2017}
G.~M. D'Ariano, Physics without physics, Int. J. Theor. Phys. \textbf{56}, 97 (2017).

\bibitem{chiribella2011informational}
G.~Chiribella, G.~M. D'Ariano, and P.~Perinotti, Informational derivation of quantum theory, Phys. Rev. A \textbf{84}, 012311 (2011), arXiv:1011.6451.

\bibitem{spekkens2007toy}
R.~W. Spekkens, Evidence for the epistemic view of quantum states: A toy theory, Phys. Rev. A \textbf{75}, 032110 (2007), arXiv:quant-ph/0401052.

\bibitem{catani2017spekkens}
L.~Catani and D.~E. Browne, Spekkens' toy model in all dimensions and its relationship with stabilizer quantum mechanics, New J. Phys. \textbf{19}, 073035 (2017), arXiv:1701.07801.

\bibitem{wetterich2010qmfromclassical}
C.~Wetterich, Quantum mechanics from classical statistics, Ann. Phys. (N.Y.) \textbf{325}, 852 (2010), arXiv:0906.4919.

\bibitem{thooft2016cellular}
G.~'t~Hooft, \emph{The Cellular Automaton Interpretation of Quantum Mechanics}, Fundamental Theories of Physics Vol.~185 (Springer International Publishing, 2016).

\bibitem{elze2014quantumness}
H.-T. Elze, Quantumness of discrete Hamiltonian cellular automata, EPJ Web Conf. \textbf{78}, 02005 (2014), arXiv:1407.2160.

\bibitem{Powers:2021rfg}
S.~Powers and D.~Stojkovic, An alternative formalism for modeling spin, Eur. Phys. J. C \textbf{82}, 690 (2022), arXiv:2110.13617.

\bibitem{Powers:2023ydg}
S.~Powers, G.~Xu, H.~Fotso, T.~Thomay, and D.~Stojkovic, Statistical model for quantum spin and photon number states, Phys. Rev. A \textbf{111}, 012217 (2025), arXiv:2304.13535.

\bibitem{Powers:2023sqf}
S.~Powers and D.~Stojkovic, An event centric approach to modeling quantum systems, arXiv:2306.14922 (2023).

\bibitem{coleman2011krawtchouk}
R.~Coleman, On Krawtchouk polynomials, arXiv:1101.1798 (2011).

\bibitem{Feinsilver:2007oyk}
P.~Feinsilver and J.~Kocik, Krawtchouk polynomials and Krawtchouk matrices, arXiv:quant-ph/0702073 (2007).

\bibitem{kravchuk1929generalisations}
M.~Kravchuk, Sur une g\'{e}n\'{e}ralisation des polyn\^{o}mes d'Hermite, C. R. Acad. Sci. Paris \textbf{189}, 620 (1929).

\bibitem{Atakishiyev2005}
N.~M. Atakishiyev, G.~S. Pogosyan, and K.~B. Wolf, Finite models of the oscillator, Phys. Part. Nucl. \textbf{36}, 247 (2005).

\bibitem{Atakishiyev_2001:JPA}
N.~M. Atakishiyev, G.~S. Pogosyan, L.~E. Vicent, and K.~B. Wolf, Finite two-dimensional oscillator. I. The Cartesian model, J. Phys. A \textbf{34}, 9381 (2001).

\bibitem{Nikiforov_1991:inb}
A.~F. Nikiforov, V.~B. Uvarov, and S.~K. Suslov, \emph{Classical Orthogonal Polynomials of a Discrete Variable} (Springer, Berlin, Heidelberg, 1991), pp.~18--54.

\bibitem{Delsarte1998AssociationSA}
P.~Delsarte and V.~I. Levenshtein, Association schemes and coding theory, IEEE Trans. Inf. Theory \textbf{44}, 2477 (1998).

\bibitem{bell1964epr}
J.~S. Bell, On the Einstein-Podolsky-Rosen paradox, Physics Physique Fizika \textbf{1}, 195 (1964).

\bibitem{arfken2013mathematical}
G.~B. Arfken, H.~J. Weber, and F.~E. Harris, \emph{Mathematical Methods for Physicists: A Comprehensive Guide} (Elsevier Science, 2013).

\bibitem{Atakishiyev2003contraction}
N.~M. Atakishiyev, G.~S. Pogosyan, and K.~B. Wolf, Contraction of the finite one-dimensional oscillator, Int. J. Mod. Phys. A \textbf{18}, 317 (2003).

\bibitem{chauleur2024kravchuk}
Q.~Chauleur and E.~Faou, Discrete quantum harmonic oscillator and Kravchuk transform, ESAIM: Math. Model. Numer. Anal. \textbf{58}, 2155 (2024), arXiv:2212.03164.

\bibitem{skrgic2026emergent}
D.~\v{S}krgi\'{c} and D.~Stojkovic, Observable Hilbert spaces from exterior algebra: Grade observability, emergent su(2), and fermionic thermodynamics, companion manuscript, in preparation (2026).

\bibitem{Dicke1954CoherenceIS}
R.~H. Dicke, Coherence in spontaneous radiation processes, Phys. Rev. \textbf{93}, 99 (1954).

\bibitem{jafarov2015cpkrawtchouk}
E.~I. Jafarov, A.~M. Jafarova, and J.~Van der Jeugt, The su(2) Krawtchouk oscillator model under the CP deformed symmetry, J. Phys.: Conf. Ser. \textbf{597}, 012047 (2015), arXiv:1502.00464.

\bibitem{genest2013multivariate}
V.~X. Genest, L.~Vinet, and A.~Zhedanov, The multivariate Krawtchouk polynomials as matrix elements of the rotation group representations on oscillator states, arXiv:1306.4256 (2013).

\bibitem{genest2015charlier}
V.~X. Genest, H.~Miki, L.~Vinet, and G.~Yu, A superintegrable discrete harmonic oscillator based on bivariate Charlier polynomials, arXiv:1511.09155 (2015).

\bibitem{mayqin2024algebraic}
M.~Q. May and H.~Qin, Algebraic discrete quantum harmonic oscillator with dynamic resolution scaling, J. Phys. A: Math. Theor. \textbf{57}, 415304 (2024), arXiv:2304.01486.

\bibitem{berger2017koszulsignmap}
R.~Berger, A Koszul sign map, arXiv:1708.01430 (2017).

\end{thebibliography}
\end{document}